\documentclass[11pt]{article}
\usepackage{geometry}
\usepackage[utf8x]{inputenc} 

\geometry{a4paper, margin=1.15in}

\usepackage{graphicx}
\usepackage{colordvi}
\usepackage{color}
\usepackage{paralist}
\usepackage{bm}
\usepackage{hyperref}
\usepackage{csquotes}
\usepackage{mathpazo}
\usepackage{epsfig}
\usepackage{amsmath,amssymb}
\usepackage{amsfonts}
\usepackage{amssymb}
\usepackage{amstext}
\usepackage{amsmath}
\usepackage{xspace}
\usepackage{mathtools}
\usepackage{algorithm}
\usepackage[noend]{algpseudocode}

\newcommand{\abs}[1]{\mbox{$\left|#1\right|$}}

\newcommand{\set}[1]{\mbox{$\left\{#1\right\}$}}

\newcommand{\remove}[1]{}
\newtheorem{fact}{Fact}[section]

\newtheorem{definition}[fact]{Definition}
\newtheorem{claim}{Claim}
\newtheorem{lemma}{Lemma}
\newtheorem{theorem}{Theorem}
\def\Box{\rule{2mm}{2mm}}

\newenvironment{proof}{\noindent {\it Proof.}}{\Box \vskip \belowdisplayskip}

\newcommand{\mcR}{\mathcal{R}} 
\newcommand{\mcS}{\mathcal{S}} 
\newcommand{\mcT}{\mathcal{T}} 
\newcommand{\mcA}{\mathcal{A}} 
 
\newcommand{\mcC}{\mathcal{C}} 

\newcommand{\junk}[1]{}
\newenvironment{lp}[2]{\[\begin{array}{rcll}
                        \mbox{#1} & & #2 \\ 
                        \mbox{subject to}}{\end{array}\]}
\newcommand{\cnstr}[4]{\\ #1 & #2 & #3 & #4}

\def\UCC{\mbox{\tt UCC}}

\begin{document}

\title{Dual Half-integrality for Uncrossable Cut Cover and its Application to Maximum Half-Integral Flow}

\author{Naveen Garg, Nikhil Kumar \\ Indian Institute of Technology Delhi, India}

\maketitle

\begin{abstract}
Given an edge weighted graph and a forest $F$, the {\em 2-edge connectivity augmentation problem} is to pick a minimum weighted set of edges, $E'$, such that every connected component of $E'\cup F$ is 2-edge connected. Williamson et al. gave a 2-approximation algorithm (WGMV) for this problem using the primal-dual schema. We show that when edge weights are integral, the WGMV procedure can be modified to obtain a half-integral dual. 
The 2-edge connectivity augmentation problem has an interesting connection to routing flow in graphs where the union of supply and demand is planar. The half-integrality of the dual leads to a tight 2-approximate max-half-integral-flow min-multicut theorem. 
\end{abstract}

\section{Introduction}
Let $G=(V,E)$ be an undirected graph with integer edge costs $c:E\rightarrow\mathbb{Z}^+$ and let $f:2^V\rightarrow\mathbb{Z}^+$ be a requirement function on sets of vertices. We wish to find a set of edges, $E'$ of minimum total cost such that for every set $S$ the number of edges in $E'$ across $S$ is at least the requirement of $S$, ie. $f(S)$. This problem captures many scenarios in network design and has been the subject of much investigation. The Steiner forest problem, minimum weight maximum matching and other problems can be modeled by requirement functions which are {\em proper} and 0-1 (see Definition \ref{def:proper}) and for such functions Agrawal, Klein, Ravi~\cite{AKR} and Goemans, Williamson~\cite{GW} gave a primal-dual algorithm that is a 2-approximation. The key idea of primal-dual algorithms is to use complementary slackness to guide the construction of the dual and primal solutions which are within a factor 2 of each other. 

To use this approach for the Steiner network design problem where the requirements of sets are not just 0-1, Williamson et al.~\cite{WGMV} extend the primal dual algorithm of GW to the setting of 0-1 {\em uncrossable} requirement functions (see Definition \ref{def: uncrossable}); we call this the WGMV algorithm. The idea was to augment the connectivity of the solution in rounds with each round augmenting the requirements of unsatisfied sets by 1. The WGMV algorithm for uncrossable functions also builds a dual solution and while the primal solution constructed is integral, nothing is known of the integrality of the dual solution. In particular while for proper functions it is possible to argue that the dual solution constructed by the GW procedure is half-integral the same is not true for the WGMV procedure for uncrossable functions as is illustrated by the example in Section \ref{WGMV:counter}. 


For weakly supermodular requirement functions (see Definition \ref{def:weakly-sup}) Jain~\cite{Jain95} gave a 2 approximation algorithm based on iterative rounding. Although this algorithm does not build a dual solution, the iterative rounding technique saw a lot of interesting applications and quickly became an integral part of tool-kit of approximation algorithms. This together with the fact that the dual solution constructed by the WGMV procedure seems useful only for certifying the approximation guarantee of the procedure, implied that there were no further results on the nature and properties of the dual solution. 


In~\cite{KGS} the authors show that the problem of finding maximum multiflow when the union of the supply and demand edges forms a planar graph can be reduced to the problem of finding a large dual solution for a suitable cut-covering problem with uncrossable requirement function. In addition, a primal solution would correspond to a multicut and a half-integral dual solution would correspond to a half-integral multiflow. Therefore, a primal solution which is within twice a half-integral dual solution would imply a 2-approximate max-half-integral-multiflow min-multicut theorem for such graph classes. In \cite{KGS} the authors also show instances where max-half-integral-multiflow min-multicut gap can be arbitrarily close to 2, implying that our result is best possible. 

In this paper we show that a suitable modification to the WGMV procedure does indeed lead to a half-integral dual solution of value at least half the primal solution. 
\begin{theorem} \label{thm : half integer WGMV}
Let $G=(V,E)$ be an undirected graph with edge costs $c:E\rightarrow\mathbb{Z}^+$ and a uncrossable requirement function $f:2^V\rightarrow\set{0,1}$. One can find a subset of edges $F$ and an assignment, $y$, of non-negative half-integral dual variables to sets such that for all edges $e\in E$, $\sum_{S:e\in\delta(S)} y_S\le c_e$ and $\sum_{e\in F} c_e\le 2\sum_S f(s)y_S$.
\end{theorem}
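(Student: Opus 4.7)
The plan is to present a modified version of the WGMV primal-dual schema that maintains the invariant that every dual variable $y_S$ remains half-integral throughout the execution. Recall that WGMV proceeds in phases: in each phase it identifies the minimal violated sets (cores) of the current partial solution $F$, grows their dual variables uniformly, and ends when some edge $e \notin F$ becomes tight and is added to $F$. Since $f$ is uncrossable, the collection $\mathcal{C}$ of cores at any moment forms a laminar family---this is the structural handle that I will exploit. The proof naturally splits into three parts: (i) describe the modified algorithm, (ii) verify the half-integrality invariant, and (iii) prove the $2$-approximation guarantee.

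The crucial modification concerns how growth is carried out within a phase. Naively growing every $C \in \mathcal{C}$ at rate $1$ can cause an edge $e$ crossing an odd number of cores to become tight at a non-half-integer time, which is precisely the failure illustrated by the counterexample in Section~\ref{WGMV:counter}. I would address this by growing only a carefully chosen sub-collection $\mathcal{C}' \subseteq \mathcal{C}$ in each phase, so chosen that, for every edge $e$, the parity of $|\mathcal{C}' \cap \{S : e \in \delta(S)\}|$ together with the current (half-integer) dual load on $e$ forces the next tight event to occur at a half-integer time. A natural candidate---leveraging the laminarity of $\mathcal{C}$---is a parity colouring of the laminar tree, growing cores of one colour at a time; each edge then crosses at most two cores of the same colour, which is exactly what makes the arithmetic half-integral. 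Proving that such a schedule exists and always makes progress (each violated set is eventually covered) is the main combinatorial lemma.

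Once every violated set has been covered, perform the standard reverse-delete on $F$ to obtain a minimal feasible primal. The approximation analysis then mirrors the standard WGMV argument: by complementary slackness $c(F) = \sum_{S} |F \cap \delta(S)|\, y_S$, and the uncrossability of $f$ together with the reverse-delete property bounds the average value of $|F \cap \delta(C)|$ over minimal violated sets by $2$, yielding $c(F) \le 2 \sum_S f(S) y_S$. Any slowdown introduced by alternating which colour class is grown is absorbed into the $2$-approximation rather than compounding with it. The hardest step I anticipate is the parity-colouring argument itself: one must show that a schedule exists which is simultaneously half-integer-preserving, progress-making, and compatible with the reverse-delete analysis, and it is in reconciling these three requirements---especially the progress guarantee in the presence of deeply nested cores---that I expect the main technical work to lie.
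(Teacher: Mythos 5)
Your diagnosis of the failure mode, and therefore your proposed fix, do not match the actual obstruction. The difficulty is not that an edge can cross ``an odd number of cores'' during a growth step: at every moment the minimal violated sets are pairwise disjoint (a direct consequence of uncrossability), so every edge crosses at most two of them, just as in GW. The real problem, exhibited by the counterexample in Section~\ref{WGMV:counter}, is \emph{asynchronous activation}: a new minimally unsatisfied set can appear in the middle of the process containing vertices that have never been inside any active set, so that when an edge between this fresh set and a long-active set goes tight, the two sides have accumulated different amounts of dual and the tight time is forced off the half-integer grid. Restricting which cores are grown in a phase does not repair this, because the parity mismatch is already baked into the dual load at the instant the new core is created---a parity-respecting growth schedule cannot retroactively change $\sum_{T\ni u, T\subseteq S} y_T$ versus $\sum_{T\ni v, T\subseteq S}y_T$ for the two endpoints. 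Moreover, a ``parity colouring of the laminar tree'' is not available in advance: the laminar family $\mathcal{S}$ is revealed online as the algorithm runs, and the sets that will become minimally unsatisfied later depend on the duals you raise now.

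There is also a second, independent gap: if you grow only a sub-collection $\mathcal{C}' \subsetneq \mathcal{C}$ per phase, the standard degree-counting argument does not transfer. The WGMV bound relies on averaging $|\delta_F(C)|$ over \emph{all} active cores; a grown core could have all its $F$-edges directed into ungrown cores, and nothing caps the average degree of $\mathcal{C}'$ alone at $2$. In fact the paper has to do the opposite of relaxing the degree bound: it needs a \emph{strengthened} version, $\sum_{S\in\mathcal{S}^i}|\alpha^i(S)|\le 2|\mathcal{A}^i|-2+|\mathcal{R}^i|$ (Lemma~\ref{lem:equal}), precisely because its fix---deterministically reducing $c_e$ by $1/2$ for edges whose parity disagrees with the newly-activated core---introduces an extra charge that must be paid for out of the slack in the degree bound. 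That credit comes from ``critical'' sets and a token/marking scheme (Lemmas~\ref{lem:one}--\ref{lem:marking}), none of which appears in your outline. So the approach you sketch differs from the paper's not just in mechanism (schedule modification versus cost modification) but in being aimed at a misidentified cause, and the step you flag as ``the hardest'' (existence of a progress-making colouring) is not the step that actually determines success or failure.
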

To achieve this we need to build an alternate stronger analysis of the 2-approximation of the WGMV algorithm and these are the main results of this paper. In Section~\ref{sec:proper} we argue that the Goemans-Williamson algorithm for proper functions leads to half-integral duals. To prove the above, we come up with a notion of parity of a node with respect to the current dual solution. The crux of our argument is to show that all nodes in an active set have the same parity. We then employ the idea of ensuring that all nodes in an active set have the same parity to modify the WGMV procedure in Section~\ref{sec:WGMVmodified}. However our procedure for ensuring uniform parity entails reducing some edge costs by 1/2. Since this decrease in edge costs also needs to be bounded by the dual solution we need a stronger guarantee on the total degree of the active sets in each iteration of the WGMV procedure. We develop this alternate analysis in Section~\ref{sec:WGMVanalysis}. Finally, Section~\ref{sec:Seymour} shows how maximum flow in Seymour graphs corresponds to building the dual solution for a suitable uncrossable cut cover problem and lets us claim the following result which is also best possible.
\begin{theorem} \label{half integral maxflow mincut}
Let $G+H$ be planar. There exists a feasible half-integral flow of value $F$ and a multicut of value $C$ such that $C \leq 2F$. Further, such a flow and cut can be computed in polynomial time.
\end{theorem}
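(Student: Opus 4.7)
\medskip
\noindent\emph{Proof proposal.} The plan is to package the max half-integral flow / min multicut question for planar $G+H$ as an uncrossable cut cover instance and then apply Theorem~\ref{thm : half integer WGMV} directly. Let $G=(V,E)$ denote the supply graph with integer edge capacities $c:E\to\mathbb{Z}^+$, and let $H$ be the demand graph on $V$. Following~\cite{KGS} (and Section~\ref{sec:Seymour}), define the requirement function $f:2^V\to\{0,1\}$ by $f(S)=1$ iff $\delta_H(S)\ne\emptyset$, so that $S$ is required precisely when it separates some demand pair. The fact that $G+H$ is planar is exactly what forces $f$ to be uncrossable in the sense of Definition~\ref{def: uncrossable}; I would verify this by the standard case analysis on two violated sets $A,B$, invoking planarity to rule out the configuration in which neither $\{A\cap B, A\cup B\}$ nor $\{A\setminus B, B\setminus A\}$ preserves both separation and violation. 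A feasible primal for this cut cover instance is then exactly a multicut in $G+H$, and its cost is the multicut value.

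\medskip
\noindent I would then apply Theorem~\ref{thm : half integer WGMV} to obtain an edge set $F\subseteq E$ and a half-integral dual $y\ge 0$ satisfying the capacity inequalities $\sum_{S:\,e\in\delta(S)} y_S\le c_e$ for every $e\in E$ and the primal-dual bound $\sum_{e\in F}c_e\le 2\sum_S f(S)\,y_S$. The set $F$ is already a multicut of cost $C:=\sum_{e\in F}c_e$. What remains is to convert the dual $y$ into a feasible half-integral multicommodity flow in $G$ of value $F^{\star}:=\sum_S f(S)\,y_S$. For this step I would invoke the correspondence developed in Section~\ref{sec:Seymour}: using planar duality in $G+H$ together with the laminar structure of the active sets maintained by the modified WGMV procedure, each $S$ with $y_S>0$ contributes $y_S$ units along a path (in the appropriate routing) that crosses $\delta(S)$ an odd number of times and joins two vertices of a demand pair in $\delta_H(S)$; aggregating these contributions across $S$ produces a single multicommodity flow. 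The dual feasibility inequalities are precisely the per-edge capacity constraints for this aggregated flow, and half-integrality of $y$ carries over to half-integrality of the flow.

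\medskip
\noindent Combining the two, we obtain a feasible half-integral flow of value $F^{\star}$ and a multicut of value $C\le 2F^{\star}$, which is the assertion of Theorem~\ref{half integral maxflow mincut}. The polynomial-time claim follows because the modified WGMV procedure developed in Section~\ref{sec:WGMVmodified} runs in polynomial time and the dual-to-flow reduction of~\cite{KGS} is explicit and polynomial. Tightness of the factor $2$ is witnessed by the family of instances already noted in~\cite{KGS}. The main obstacle I anticipate is the flow-extraction step: one must argue that the path routings chosen for distinct active sets assemble into a globally consistent multicommodity flow respecting all capacities simultaneously, which rests on the laminar/planar compatibility of the WGMV active sets rather than on any per-set accounting. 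Once that structural compatibility is in place, the rest of the argument is bookkeeping.
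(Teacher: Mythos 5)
Your formulation is set up in the wrong graph, and the resulting LP is not the multicut problem. You define $f:2^V\to\{0,1\}$ on the \emph{primal} vertex set by $f(S)=1$ iff $\delta_H(S)\neq\emptyset$, and then take as primal the uncrossable cut cover problem: pick minimum-cost $E'\subseteq E$ so that every $S$ with $f(S)=1$ has some edge of $E'$ crossing it. But a feasible solution to this problem is an edge set that \emph{connects} rather than disconnects: if $S$ separates a demand pair, the constraint only forces $E'$ to cross $S$ once, which is exactly what a Steiner forest on the demand pairs does. A multicut must instead \emph{contain} every edge of some cut separating each pair. So the object you get from Theorem~\ref{thm : half integer WGMV} applied to your $f$ is not a multicut, and the whole charge from primal cost to $C$ breaks down at that point. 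Relatedly, your $f$ is in fact proper (hence uncrossable) for any $G,H$ — planarity plays no role in its uncrossability — which is a signal that planarity has not entered the argument where it must.

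The paper's route, and the one from~\cite{KGS}, is to pass to the planar dual of $G+H$ first. There, a multicut of $(G,H)$ corresponds exactly to a feasible solution of $\2ECAP$ on $(V^*,E^*\cup F^*)$: one seeks $C^*\subseteq E^*$ so that every component of $C^*\cup F^*$ is 2-edge connected, i.e.\ every dual set crossed by exactly one demand-dual edge is covered. That requirement function is the uncrossable $f$ to which Theorem~\ref{thm : half integer WGMV} is actually applied, and it is only after this translation that the LP dual variables $y_S$ (on subsets of $V^*$) correspond to flow in the primal graph $G$ via planar cycle/cut duality. Your sketch of the dual-to-flow step gestures at this correspondence, but because you never moved to the dual graph, the $y_S$ in your setup are dual variables of the wrong LP (a fractional relaxation of something Steiner-forest-like), and there is no value-preserving map from them to a multicommodity flow in $G$. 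The fix is to replace the first paragraph of your argument with the reduction of Section~\ref{sec:Seymour}: formulate $\2ECAP$ in $(V^*,E^*\cup F^*)$, run the modified WGMV there, take $F$ pulled back to $G$ as the multicut, and use the flow/dual correspondence of~\cite{KGS} to convert the half-integral $y$ to a half-integral flow of value $\sum_S f(S) y_S$ in $G$. With that substitution the remainder of your argument (applying Theorem~\ref{thm : half integer WGMV}, the factor-2 inequality, polynomial time, and the tightness from~\cite{KGS}) is essentially the paper's proof.
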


\section{Preliminaries}
Given a graph $G=(V,E)$ with edge costs $c:E\rightarrow\mathbb{R^+}$ and a 0-1 requirement function $f:2^V\rightarrow\{0,1\}$ we are interested in picking a subset of edges $E'$ of minimum total cost such that every set with requirement 1 has at least one edge of $E'$ across it. In other words, for all $S\subseteq V$, $|\delta_{E'}(S)|\ge f(S)$, where $|\delta_{E'}(S)|$ is the number of edges in $E'$ which have exactly one endpoint in $S$.

\begin{definition} \label{def:proper}
A function $f:2^{V}\rightarrow \{0,1\}$ is called \textbf{proper} if $f(V)=0,f(S)=f(V-S)$ for all $S \subseteq V$ and for any disjoint $A,B \subseteq V$, $f(A \cup B)\leq \max \{f(A),f(B) \}$.
\end{definition}

\begin{definition}  \label{def:weakly-sup}
A function $f:2^{V}\rightarrow \{0,1\}$ is called \textbf{weakly supermodular} if $f(V)=f(\phi)=0$ and for any $A,B \subseteq V$, $f(A)+f(B) \leq \max \{f(A \cap B)+f(A \cup B),f(A \setminus B) + f(B \setminus A)\}$.
\end{definition}

\begin{definition}  \label{def: uncrossable}
A function $f:2^{V}\rightarrow \{0,1\}$ is called \textbf{uncrossable} if $f(V)=f(\phi)=0$ and for any $A,B \subseteq V$, if $f(A)=f(B)=1$, then either $f(A \cap B)=f(A \cup B)=1$ or $f(A \setminus B)=f(B \setminus A)=1$.
\end{definition}

It is easy to argue that every proper function is also weakly supermodular and every weakly supermodular function is also uncrossable. In this paper we will only be interested in uncrossable requirement functions and shall refer to the problem in this setting as the {\em uncrossable cut cover problem} (\UCC). The following integer program for $\UCC$ is well known.
\begin{lp}{minimize}{\sum_{e \in E} c_e x_e}
\cnstr{\sum_{e \in \delta(S) } x_e}{\ge}{f(S)}{ S \subseteq V}
\cnstr{x_e}{\in}{\{0,1\}}{e \in E}
\end{lp}
We can relax the integrality constraint on $x_{e}$ to $0 \leq x_{e} \leq 1$ to get a linear programming relaxation of the above. The dual program of the relaxation can be given as: 
\begin{lp}{maximize}{\sum_{S \subseteq V} f(S)y(S)}
\cnstr{\sum_{S: e \in \delta(S) } y_S}{\le}{c_{e}}{e \in E}
\cnstr{y_S}{\ge}{0}{S \subseteq V}
\end{lp}
Williamson et al. \cite{WGMV} gave a primal-dual 2-approximation algorithm for the above integer program for uncrossable $f$.

\section{Half-integrality of the GW-dual for proper functions}
\label{sec:proper}
We first argue that the Goemans-Williamson (GW) algorithm - for the case when requirement functions are proper and edge costs are integral - constructs a half-integral dual whose value is at least half the primal integral solution.

The GW algorithm proceeds by raising dual variables corresponding to sets of vertices and picking edges which are tight into the current solution. An edge $e$ is tight when the sum of dual variables of sets containing exactly one end-point of $e$ equals $c(e)$. The algorithm raises dual of all minimal sets $S$ such that $f(S)=1$ but no edge going across $S$ has been picked in the current solution. We imagine growing the duals in a continuous manner and define a notion of time: $t=0$ at start of the algorithm and $y_S$ increases by $\delta$ during $[t,t+\delta]$ if $S$ is a minimally unsatisfied set at every point of time in $[t,t+\delta]$. If $f$ is proper, these minimal sets correspond exactly to the connected components formed by the set of tight edges. Let $C$ be a connected component at time $t$. If $f(C)=1$ then $C$ is active while $C$ is inactive if $f(C)=0$. In each iteration, the GW procedure raises dual variables of all active sets simultaneously till an edge goes tight. At this point the connected components are recomputed and the algorithm continues with the next iteration unless all sets are inactive. Let $F$ be the set of tight edges picked after the first phase. In a second phase, called the {\em reverse delete}, the GW algorithm considers the edges of $F$ in the reverse order in which they were added to $F$. If the removal of an edge from $F$ does not violate the requirement function of any set then the edge is removed.

We shall only be concerned with the first phase of the GW algorithm since it is in this phase that the dual variables, $y:2^V\rightarrow\mathbb{R}_{\ge 0}$ are set. Let $\mcS=\set{S:y_S>0}$ and note that this family of sets is laminar. For $v\in S, S\in\mcS$, we define the parity of $v$ with respect to $S$ as $\pi_v(S)=\left\{\sum_{T:v\in T\subseteq S}y_T\right\}$, where $\{x\}$ denotes the fractional part of $x$. 
If $S$ is active at time $t$ then there exists a vertex $v\in S$ which for all times in $[0,t]$ was in an active component; we call such a vertex an {\em active vertex} of set $S$. 

We now argue that the GW procedure ensures that for all $S\in\mcS$, for all $ u,v\in S$, $\pi_u(S)=\pi_v(S)$. We call this quantity the parity of set $S$, $\pi(S)$, and show that $\pi(S)\in\set{0,1/2}$. Let $S$ be formed by the merging of sets $S_1,S_2$ at time $t$. We induct on the iterations of the GW procedure and assume that all vertices in $S_1$ (respectively $S_2$) have the same parity with respect to $S_1$ (respectively $S_2$). If $S_1$ is active at time $t$ then $\pi(S_1)=\pi_v(S_1)=\{t\}$ where $v$ is an active vertex of set $S_1$. Similarly if $S_2$ is active at time $t$ then $\pi(S_2)=\{t\}$. Thus if both $S_1,S_2$ are active at time $t$ then $\pi(S_1)=\pi(S_2)$ and hence all vertices of $S$ have the same parity with respect to $S$. Let $e=(u,v), u\in S_1, v\in S_2$ be the edge which gets tight when $S_1,S_2$ merge at time $t$. Since $l(e)$ is integral and $\sum_{T:u\in T\subseteq S_1} y_T+\sum_{T:v\in T\subseteq S_2} y_T=l(e)$, we have that $\pi(S_1)=\pi(S_2)\in\set{0,1/2}$.

Suppose only $S_1$ is active at time $t$. By our induction hypothesis $\pi(S_2)\in\set{0,1/2}$. Once again, since $l(e)$ is integral and $\sum_{T:u\in T\subseteq S_1} y_T+\sum_{T:v\in T\subseteq S_2} y_T=l(e)$, we have that $\pi(S_1)=\pi(S_2)$ which implies that all vertices of $S$ have the same parity with respect to $S$.

Since $\pi(S),\pi(S_1)\in\set{0,1/2}$, it must be the case that $\set{y_S}\in\set{0,1/2}$. Since this is true for all sets $S\in\mcS$ this implies that the duals constructed by the GW procedure are half-integral.

\section{The WGMV algorithm}
We now give a brief description of the algorithm in \cite{WGMV}. Given an undirected graph $G=(V,E)$ with edge costs $c_{e} \geq 0$ and a uncrossable function $f$ we wish to find a  set of edges $F' \subseteq E$ such that for any $S \subseteq V, |F' \cap \delta(S)| \geq f(S)$. A set $S$ is said to be unsatisfied if $f(S)=1$ but no edge crosses $S$ in the current solution. 

The algorithm works in iterations. At the beginning of every iteration the algorithm computes a collection of minimally unsatisfied sets. Williamson et al.\cite{WGMV} show that minimally unsatisfied sets are disjoint and can be found in polynomial time (follows easily from uncrossability). Raise the dual variables corresponding to all minimally unsatisfied sets simultaneously until some edge is tight (the total dual across it equals its cost). All edges that go tight are added to a set $T$. The edges of $T$ are considered in an arbitrary order and $e\in T$ is added to $F$ if it crosses a minimally unsatisfied set. Note that whenever an edge is added to $F$ the collection of minimally unsatisfied sets is recomputed. The {\em growth phase} of the WGMV procedure stops when all sets are satisfied; let $F$ be the set of edges picked in this phase. 

The edges of $F$ are considered in the reverse order in which they were picked. An edge $e\in F$ is dropped from the solution if its removal keeps the current solution feasible. 

At the end of the procedure, we have a set of edges $F$ and a feasible dual solution $y_S$ such that $\sum_{e \in F} c_{e} x_{e} \leq 2 \sum_{S}f(S)y_S$. By weak duality, $\sum_{e \in F} c_{e} x_{e} \geq  \sum_{S}f(S)y_S$ and this shows that the cost of solution picked by the algorithm is at most twice the optimal. 
\begin{algorithm}
\caption{Primal-Dual Algorithm for uncrossable functions}\label{alg:euclid}
\begin{algorithmic}[1]
\Procedure{WGMV} { $G=(V,E)$ with cost $c_{e}$, uncrossable function $f$}
\State $y\leftarrow 0$, $F \leftarrow \phi$
\While {$\exists S \subseteq V$ such that $S$ is not satisfied}
\State Compute $\mcC$, the collection of minimally unsatisfied sets with respect to $F$.
\State Increase $y_{C}$ for all $C \in \mathcal{C}$ simultaneously until some edge $e \in \delta(C), C \in \mathcal{C}$ is tight ($c_e=\sum_{S:e\in \delta(S)}y_S$)
\State  Add all tight edges to $T$
\ForAll{$e\in T$}
\If{$\exists C\in\mcC, e\in\delta(C)$} 
\State $F\leftarrow F\cup\set{e}$; Recompute $\mcC$
\EndIf
\EndFor
\EndWhile
\ForAll{$e\in F$}
\State // Edges of $F$ are considered in the reverse order in which they were added to $F$
\If{$F\setminus\set{e}$ is feasible}
\State $F\leftarrow F\setminus\set{e}$
\EndIf
\EndFor
\State \textbf{return} $F$
\EndProcedure
\end{algorithmic}
\end{algorithm}

\subsection{Duals constructed by WGMV are not half-integral} \label{WGMV:counter}
In the example in Figure~\ref{fig:counter}, the red edges are not edges of the graph $G$. For a set $S\subseteq V$,$f(S)=1$ iff there is exactly one red edge with exactly one end point in $S$. Thus this problem corresponds to picking edges so as to augment the red tree into a 2-edge connected graph. It is known that $f$ is uncrossable. In each iteration the WGMV procedure raises dual variables corresponding to all minimally unsatisfied sets. The edge $(c,d)$ gets tight in the first iteration. At the end of the first iteration $y_{\set{e}}=1/2$ and so in the second iteration $y_{\set{e}}$ increases to 3/4 and $y_{\set{b,c,d}}$ to 1/4 before edge $(b,e)$ goes tight. 
\begin{figure}[h]
\centering
\includegraphics[scale=0.5]{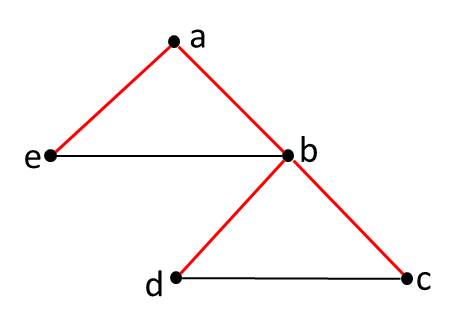}
\caption{Example showing that the duals constructed by the WGMV procedure are not half-integral\label{fig:counter}}
\end{figure}

\section{A stronger analysis of the WGMV algorithm}
\label{sec:WGMVanalysis}
To analyse the algorithm, Willimason et al.\cite{WGMV} argue that in each iteration the total contribution of the dual variables to the primal solution is at most twice the increase in the value of the dual solution. This then, added over all iterations, implies that $\sum_{e\in F} c_e x_e\le 2\sum_S f(S)y_S$. If in an iteration the dual values of all active sets increases by $\delta$ then the contribution of the dual variables to the primal solution equals $\delta$ times the total degree of the active sets in $F$. On the other hand the increase in the value of the dual solution is $\delta$ times the number of active sets and hence Williamson et al. argue that in each iteration the average degree of the active sets in $F$ is at most 2. 

Let $\mcS$ be the collection of minimally unsatisfied sets identified during a run of the algorithm. Note thst we do not claim that $y_S >0$ for $S\in\mcS$. The uncrossability of $f$ implies that $\mcS$ is a laminar family.  Add $V$, the set of all vertices, to $\mcS$ and construct a tree, $\mcT=(X,Y)$, which has vertex set $X=\set{v_S|S\in\mcS}$. $v_A$ is the parent of $v_B$ iff $A$ is the minimal set in $\mcS$ containing $B$. 

Each set $S\in\mcS$ is labelled with the number of the iteration in which $S$ became satisfied; let $l:\mcS\rightarrow[T]$ be this function. Let $\mcS^i$ be the sets with label at least $i$; these are the minimally unsatisfied sets encountered in iterations $i$ or later. Similarly, each edge $e\in F$ is labeled with the number of the iteration in which it became tight. We overload notation and let $l:F\rightarrow[T]$ also denote this function. Let $F^i\subseteq F$ be edges with label at least $i$.  We note a few properties of these labels.
\begin{enumerate}
    \item if $B\subset A$ then $l(B)\le l(A)$.
    \item if $e\in\delta_F(S)$ then $l(e)\ge l(S)$
\end{enumerate}

Let $v_{B_1},v_{B_2},\ldots v_{B_p}$ be the children of node $v_A$ in $\mcT$ (see Figure~\ref{fig:notation}). We number sets so that $l(B_1)\ge l(B_2)\ge \cdots \ge l(B_p)$. Let $p_i\in[p]$ be the largest index such that $l(B_{p_i})\ge i$. Hence all sets $B_j, j\in[p_i]$ are in $\mcS^i$.
Let $X^i_A=A\setminus\cup_{j\in [p_i]} B_j$ and $H^i_A$ be a graph whose nodes correspond to sets $X^i_A, B_1,\ldots, B_{p_i}$ and edges correspond to the edges between these sets in $F$. Since sets $B_j, j\in[p_i]$ have label at least $i$,  edges in $H^i_A$ have label at least $i$ and hence they are in $F^i$.  

\begin{claim}
$H^i_A$ is a forest.
\end{claim}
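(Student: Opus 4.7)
The plan is to proceed by contradiction. Suppose $H^i_A$ contains a cycle $C$. Since every edge of $H^i_A$ must cross some $B_k$ with $k \le p_i$, property 2 of the labels gives $l(e) \ge i$ for each $e \in C$, so all of $C$ lies in $F^i$. Choose $e^* \in C$ with the largest label $j = l(e^*) \ge i$, breaking ties (if several cycle edges share label $j$) by taking $e^*$ to be the last of them added during iteration $j$'s processing of tight edges $T$. Then every other edge of $C$ is already in $F$ at the moment $e^*$ is about to be added during the WGMV growth phase.

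At that moment $e^*$ crosses some minimally unsatisfied set $M \in \mcS$. Because both endpoints of $e^*$ lie in $A$ and $\mcS$ is laminar, $M$ cannot contain $A$ or be disjoint from $A$, nor can $M = A$ (which would prevent $e^*$ from crossing $M$); hence $M \subsetneq A$. Since $l(M) \ge j \ge i$ we have $M \in \mcS^i$, and because $B_1, \ldots, B_{p_i}$ are the maximal elements of $\mcS^i$ strictly inside $A$, there exists $k \le p_i$ with $M \subseteq B_k$. Consequently one endpoint of $e^*$ lies in $M \subseteq B_k$ and the other in $A \setminus B_k$ (else $e^*$ would be a loop on the super-node $B_k$ of $H^i_A$).

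To finish, I would exhibit another edge of $C$ that also crosses $M$ at the moment $e^*$ is added, contradicting $M$ being unsatisfied then. Tracing $C$ from the $B_k$-endpoint of $e^*$, the cycle leaves $B_k$ and re-enters it via a second edge $e^\dagger \in C \setminus \{e^*\}$ with one endpoint in $B_k$ and one outside; by the tie-breaking choice of $e^*$, $e^\dagger$ is already in $F$ when $e^*$ is added and so cannot cross $M$. When $M = B_k$ this is an immediate contradiction. The main obstacle is the case $M \subsetneq B_k$, where the $B_k$-endpoint of $e^\dagger$ could lie in $B_k \setminus M$ and $e^\dagger$ would cross $B_k$ without crossing $M$. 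My plan to handle this is to recurse on the laminar depth of $M$ in $\mcT$: apply the claim inductively with $A$ replaced by $B_k$ and $i$ replaced by the iteration at which $M$ becomes active inside $B_k$, and lift the putative cycle through $B_k$ to a cycle in the smaller analogous graph that contains $M$ as one of its super-nodes. The technical heart of the proof will be verifying that the WGMV dynamics restricted to this smaller sub-problem satisfy the same structural hypotheses (in particular, that the minimally unsatisfied sets inside $B_k$ during the relevant iterations are precisely the descendants of $v_{B_k}$ in $\mcT$ still in $\mcS^i$), so that the recursion closes cleanly.
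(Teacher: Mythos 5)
Your setup matches the paper's: take the last-added cycle edge $e^*$ and the minimally unsatisfied set $M\in\mcS$ it crosses at that moment, then deduce $M\subseteq B_k$ for some $k\in[p_i]$. Where the two proofs diverge is the finish. You try to exhibit a second cycle edge also crossing $M$. This works when $M=B_k$, but --- as you correctly flag --- when $M\subsetneq B_k$ the other $B_k$-incident cycle edge $e^\dagger$ need not cross $M$, and the recursion you sketch to repair this does not close: the only edges of the putative cycle touching $B_k$ are $e^*$ and $e^\dagger$, both of which leave $B_k$, so there is no cycle inside $B_k$ to hand to an inductive hypothesis, and $M$ need not be a child of $v_{B_k}$ in $\mcT$, so it generally will not appear as a super-node of any $H^{i'}_{B_k}$ without descending arbitrarily deep. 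This is a real gap, not a technicality.

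The paper closes the case $M\subsetneq B_k$ by a monotonicity observation, not a recursion --- this is what its terse line ``picking $e$ did not lead to any unsatisfied set in $\mcS$ getting satisfied'' is appealing to. Since $M\subsetneq B_k$ and both lie in the laminar family $\mcS$, $M$ is a strict descendant of $v_{B_k}$ in $\mcT$. At the moment $B_k$ first becomes a minimally unsatisfied set, it has no unsatisfied proper subset; in particular $\delta_F(M)\neq\emptyset$ at that moment, and $\delta_F(\cdot)$ only grows during the run, so $M$ remains satisfied from that moment onward. But $B_k$ became minimally unsatisfied, and was then satisfied (it already has $e^\dagger$ across it), strictly before $e^*$ was added. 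Hence $M$ was already satisfied when $e^*$ was added, contradicting the choice of $M$. Replace your recursion with this observation and the proof is complete.
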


\begin{proof}
For contradiction assume $H^i_A$ has a cycle and consider the edge of the cycle, say $e=(u,v)$, which was added last to $F$. We consider two cases.

$u \in B_r$ and $v \in B_s$, $r,s\in[p_i]$. When $e$ was picked, both $B_r,B_s$ had another edge in $F$ across them and were therefore satisfied. Recall that $\mcS$ is the collection of all the minimally unsatisfied sets encountered during the growth phase of the algorithm. Picking $e$ did not lead to any unsatisfied set in $\mcS$ getting satisfied and this is a contradiction. 

$u \in B_r$ and $v \in X^i_A$, $r\in[p_i]$. No subset of vertices in $X^i_A$ is unsatisfied in the $i^{\rm th}$ (or any subsequent) iteration. When $e$ was picked, $B_r$ had another edge in $F$ across it and was therefore satisfied. Once again picking $e$ did not lead to any unsatisfied set in $\mcS$ getting satisfied and this is a contradiction.
\end{proof}

Since $H^i_A$ is a forest on $p_i+1$ vertices it contains at most $p_i$ edges. 

\begin{definition}
A set $A\in\mcS$ is {\em critical} in iteration $i$ if $H^i_A$ is a tree of which the node corresponding to $X^i_A$, is a leaf.
\end{definition}
For a set $A\in\mcS^i$, let $\alpha^i(A)= \delta_{F}(A)\setminus\cup_{S\subset A, S\in\mcS^i} \delta_{F}(S)$. Thus $\alpha^i(A)$ is the set of edges of $F$ which have one endpoint in the set $A\setminus\cup_{S\subset A, S\in\mcS^i} S$ and the other endpoint in $V\setminus A$. Equivalently $\alpha^i(A)$ is the subset of edges in $\delta_F(A)$ which are incident on vertices in $X^i_A$. We note the following important property of $\alpha^i(A)$.
\begin{claim}
Let $A\in\mcS^i$. The collection of sets $\set{\alpha^i(S)| S\in\mcS^i, S\subseteq A}$ forms a partition of the set $\delta_F(A)$.
\end{claim}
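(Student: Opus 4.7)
The plan is to verify the two standard partition properties — that every $e \in \delta_F(A)$ lies in some $\alpha^i(S)$ (coverage) and that no edge lies in two of them (disjointness) — using only the equivalent characterization stated just before the claim (an edge of $\alpha^i(S)$ has one endpoint in $X^i_S$ and the other in $V\setminus S$) together with the laminarity of $\mcS^i$, inherited from $\mcS$. The canonical witness for an edge $e=(u,v)\in \delta_F(A)$, with $u\in A$ and $v\notin A$, will be the minimum member $S$ of $\set{T\in \mcS^i : u\in T}$; this collection is non-empty because it contains $A$, and laminarity makes it a chain under inclusion, so the minimum exists.

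For coverage, I would argue that this $S$ automatically satisfies $S\subseteq A$ (from comparing $S$ and $A$ via laminarity, since both contain $u$), that $u\in X^i_S$ (by minimality, $u$ sits in no $\mcS^i$-child of $S$), and that $v\in V\setminus S$ (since $v\notin A\supseteq S$); the equivalent characterization then places $e$ in $\alpha^i(S)$. For disjointness, I would start from $e\in \alpha^i(S_1)\cap \alpha^i(S_2)$ with $S_1\ne S_2$, both in $\mcS^i$ and contained in $A$; since $v\notin A\supseteq S_j$, the endpoint of $e$ living in $X^i_{S_j}$ must be $u$ in each case, so $u\in X^i_{S_1}\cap X^i_{S_2}\subseteq S_1\cap S_2$. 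Laminarity then forces (WLOG) $S_1\subsetneq S_2$, whence $S_1$ sits inside some $\mcS^i$-child $B$ of $S_2$, giving $u\in B$ and contradicting $u\in X^i_{S_2}$.

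The main obstacle I foresee is definitional rather than technical: for a strictly nested $S\subsetneq A$, $\alpha^i(S)$ as defined can in principle contain edges from $X^i_S$ into $A\setminus S$ that are internal to $A$ and therefore do not belong to $\delta_F(A)$. Consequently, the partition statement is most naturally read as saying that the sets $\alpha^i(S)\cap \delta_F(A)$, indexed by $S\in \mcS^i$ with $S\subseteq A$, partition $\delta_F(A)$ — and this refined version is exactly what the coverage and disjointness arguments above establish, since both are run with $e$ ranging over $\delta_F(A)$.
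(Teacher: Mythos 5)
The paper states this claim without proof, so there is no reference argument to compare against; your proof is correct and is the natural one. You are also right to flag the imprecision in the statement: for a strict subset $S\subsetneq A$ in $\mcS^i$, the set $\alpha^i(S)\subseteq\delta_F(S)$ may contain edges running from $X^i_S$ into $A\setminus S$, which are internal to $A$ and hence not members of $\delta_F(A)$, so the parts are not literally subsets of $\delta_F(A)$; the assertion should be read as saying that the traces $\alpha^i(S)\cap\delta_F(A)$, for $S\in\mcS^i$ with $S\subseteq A$, partition $\delta_F(A)$, which is exactly what your coverage and disjointness arguments establish. One step worth making explicit: in the disjointness argument, when you pass from $S_1\subsetneq S_2$ to an $\mcS^i$-child $B$ of $S_2$ containing $S_1$, the fact that $B$ is in $\mcS^i$ (equivalently $l(B)\ge i$, so that $B$ is among the $B_j$ with $j\in[p_i]$ removed when forming $X^i_{S_2}$) follows from the label monotonicity property recorded in the paper ($S_1\subseteq B$ implies $l(S_1)\le l(B)$, hence $l(B)\ge i$); citing it makes the contradiction with $u\in X^i_{S_2}$ airtight. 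With that noted, both halves of the argument hold: coverage because the minimal $S\in\mcS^i$ containing $u$ exists by laminarity and finiteness, satisfies $S\subseteq A$, places $u$ in $X^i_S$ and keeps $v$ outside $S$; disjointness because the sets $X^i_S$ over $S\in\mcS^i$, $S\subseteq A$, are pairwise disjoint.
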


Let $\mcA^i$ be the collection of minimally unsatisfied sets whose dual is raised in iteration $i$ of the WGMV algorithm. These are the {\em active sets} in iteration $i$. Note that 
\begin{enumerate}
    \item $\mcA^i\subseteq \mcS^i$. 
    \item A set $S \in \mathcal{S}$ is contained in $\mcS^i$ if and only if there exits an $A\in\mcA^i$ such that $A\subseteq S$.  
    \item If $A\in\mcA^i$ then no subset of $A$ is in $\mcS^i$ which implies $\alpha^i(A)=\delta_F(A)$.
\end{enumerate}

\begin{lemma}\label{lem:equal}
$\sum_{S\in\mcS^i} \abs{\alpha^i(S)} \le 2\abs{\mcA^i}-2+\abs{\mcR^i}$ where $\mcR^i$ is the collection of critical sets in iteration $i$. 
\end{lemma}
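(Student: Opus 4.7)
The plan is to rewrite $\sum_{S\in\mcS^i}|\alpha^i(S)|$ as a sum of local, per-node contributions on the laminar tree $\mcT^i$ augmented with root $V$, and then close the argument with a single tree identity. First I would set up a double-counting over $F$: for each $e=(u,v)\in F$, let $S_i(u)\in\mcS^i\cup\{V\}$ denote the smallest set in $\mcS^i$ containing $u$ (or $V$ if no such set exists) and similarly $S_i(v)$. By the definition of $\alpha^i$, the edge $e$ lies in $\alpha^i(S)$ iff $S\in\{S_i(u),S_i(v)\}$ and the other endpoint is outside that set; a short case split on the relative position of $S_i(u),S_i(v)$ in $\mcT$ then shows that the contribution of $e$ to $\sum_{S\in\mcS^i}|\alpha^i(S)|$ is $0$ if $S_i(u)=S_i(v)$, $1$ if one is a proper ancestor of the other, and $2$ if they are incomparable.

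Next, I would group these contributions by $A:=\text{LCA}(S_i(u),S_i(v))$ in $\mcT^i\cup\{V\}$, which is exactly the unique node with $e\in E(H_A^i)$. Within $H_A^i$, the edges incident to the $X_A^i$-vertex are precisely the ancestor--descendant edges (contributing $1$ each), while the edges between two distinct children $B_r,B_s$ are precisely the incomparable edges (contributing $2$ each). Writing $h_A=|E(H_A^i)|$ and $d_A$ for the degree of $X_A^i$ in $H_A^i$, the double count collapses to
\[
\sum_{S\in\mcS^i}|\alpha^i(S)|=\sum_{A}\bigl(2h_A-d_A\bigr),
\]
with the outer sum ranging over all nodes of $\mcT^i\cup\{V\}$.

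Then I would bound each summand using the already-proved fact that $H_A^i$ is a forest on $p_A+1$ vertices, so $h_A\le p_A$. If $A$ is critical, $h_A=p_A$ and $d_A=1$ give $2h_A-d_A=2p_A-1$; if $A$ is non-critical, either $H_A^i$ is disconnected (so $h_A\le p_A-1$) or $H_A^i$ is a tree with $X_A^i$ of degree $\ge 2$, and both sub-cases yield $2h_A-d_A\le 2p_A-2$. The leaves of $\mcT^i$, which by property~3 coincide with $\mcA^i$, have $p_A=0$ and contribute $0$. Summing $(2p_A-2)+\mathbf{1}[A\in\mcR^i]$ over the internal nodes and applying $\sum_A p_A=|\mcS^i\cup\{V\}|-1$ together with $|\{A:p_A\ge 1\}|=|\mcS^i\cup\{V\}|-|\mcA^i|$ makes the $|\mcS^i|$-terms cancel, leaving the claimed $2|\mcA^i|-2+|\mcR^i|$. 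The step I expect to be the main obstacle is the root bookkeeping: $V$ must be treated as a member of $\mcS$ (and therefore eligible to be counted in $\mcR^i$ if critical) so that the single forest constraint $h_V\le p_V$ supplies the one ``$-2$'' that telescoping needs; if one instead treated the subtrees of $\mcT^i$ as a forest without $V$, each root of the forest would contribute an independent slack that the right-hand side does not accommodate.
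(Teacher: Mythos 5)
Your proof is correct and follows essentially the same argument as the paper, just presented as an explicit double-count rather than the paper's token-redistribution phrasing: both decompose $\sum_{S\in\mcS^i}|\alpha^i(S)|$ over the graphs $H^i_A$, bound each term by $2p_A-2+\mathbf{1}[A\in\mcR^i]$ using the fact that $H^i_A$ is a forest and the definition of criticality, and then sum over the tree. The only cosmetic difference is that you make the per-edge contribution $\{0,1,2\}$ (keyed to the relative position of $S_i(u),S_i(v)$) and the telescoping identity $\sum_A p_A=|\mcS^i\cup\{V\}|-1$ explicit, whereas the paper compresses exactly these two steps into ``each edge receives tokens equal to its contribution to the sum'' and ``it is easy to see the total number of tokens is $2|\mcA^i|-2+|\mcR^i|$''; your remark that $V$ must be added as the root so that the subtrees of $\mcT^i$ hang off a single node (rather than being a forest of independent roots) is the same bookkeeping the paper handles by adding $V$ to $\mcS$ before building $\mcT$.
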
 
\begin{proof}
We show an argument built on redistributing tokens which help us prove the above lemma. We begin by assigning every node of tree $\mcT$ a number of tokens equal to two less than twice the number of its children in $\mcS^i$. Thus a node with 1 child in $\mcS^i$ gets no tokens. We also give every node that corresponds to a critical set in iteration $i$ an additional token. It is easy to see that the total number of tokens distributed initially is $2\abs{\mcA^i}-2+\abs{\mcR^i}$.

$v_A$ transfers one token to each edge in $H^i_A$ incident on $X^i_A$ and 2 tokens each to remaining edges in $H^i_A$. If $v_A$ has $p_i$ children in $\mcS^i$ and is critical in iteration $i$, it was assigned $2p_i-1$ tokens and these are sufficient to undertake the above assignment. If $v_A$ is not critical then it was assigned $2p_i-2$ tokens and again this is sufficient to complete the transfer of tokens to edges in $H^i_A$. 

For every edge in $e\in F^i$ there is a unique $A\in\mcS^i$ such that $e$ is in $H^i_A$. If $e$ has an endpoint in $X^i_A$ it is assigned 1 token by $v_A$. Note that this edge contributes 1 to the sum on the left. The remaining edges of $F^i$ are assigned 2 tokens each and this is also their contribution to the sum on the left. This establishes that the sum on the left equals the number of tokens assigned to edges which is at most the number of tokens assigned to nodes which in turn equals the quantity in the right.
\end{proof}

\begin{lemma}\label{lem:one}
If $A$ is critical in iteration $i$ then $\alpha^i(A)\neq\phi$.
\end{lemma}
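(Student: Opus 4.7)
The plan is to argue by contradiction: suppose $A$ is critical in iteration $i$ yet $\alpha^i(A)=\phi$. The case $p_i=0$ is immediate, since then $X^i_A=A$ and $\alpha^i(A)=\delta_F(A)$, which is nonempty because $A\in\mcS$ must be satisfied by the final $F$. We therefore focus on $p_i\ge 1$.

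For $p_i\ge 1$, criticality together with the leaf condition on $X^i_A$ produces a unique edge $e=(v,u)\in H^i_A$ with $v\in X^i_A$ and, say, $u\in B_1$. Under $\alpha^i(A)=\phi$, this $e$ is the only edge of $F$ crossing $X^i_A$, so $|\delta_F(X^i_A)|=1$; in particular every connected component of $(V,F)\setminus\{e\}$ that meets $X^i_A$ lies entirely inside $X^i_A$. The next step is to exploit the reverse-delete phase of WGMV: since $e$ survives in the final $F$, there is a witness set $T$ with $f(T)=1$ and $\delta_F(T)=\{e\}$, and without loss of generality $v\in T$, $u\notin T$. By iteratively uncrossing $T$ with each $B_j$ that it crosses properly (invoking uncrossability of $f$ on $T$ and $B_j$, which offers the replacement option $f(T\setminus B_j)=f(B_j\setminus T)=1$), we aim to replace $T$ by a witness that still satisfies $\delta_F(T)=\{e\}$ and additionally $T\cap A\subseteq X^i_A$.

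With such a refined $T$, we apply uncrossability to $f(T)=f(A)=1$. In the case $f(T\setminus A)=f(A\setminus T)=1$, the set $T\setminus A$ has $f$-value $1$, yet any edge of $F$ crossing $T\setminus A$ must either cross $T$ (and so equal $e$, whose endpoints $v,u$ both lie in $A$ and hence not in $T\setminus A$) or cross $A$ (and thus, by $\alpha^i(A)=\phi$, have its $A$-endpoint in some $B_j$, disjoint from $T\cap A\subseteq X^i_A$). Hence $\delta_F(T\setminus A)=\phi$, contradicting $f(T\setminus A)=1$. In the alternative case $f(T\cap A)=f(T\cup A)=1$, an analogous calculation gives $\delta_F(T\cap A)=\{e\}$, so $T\cap A$ is again a witness for $e$ that is properly contained in the previous $T$; iterating and combining with a similar crossing argument forces the same contradiction.

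The main obstacle will be the uncrossing step that produces a witness $T$ with $T\cap A\subseteq X^i_A$ while preserving $\delta_F(T)=\{e\}$: we need to verify that, when replacing $T$ by $T\setminus B_j$, no edges of $F$ internal to $B_j$ or in $\delta_F(B_j)$ become new crossings of the shrunken set. Controlling these extra crossings — using that $B_j\in\mcS^i$ is itself satisfied and fitting into the tree structure of $H^i_A$ — is where the delicate bookkeeping lies.
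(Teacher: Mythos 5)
Your proposal has a genuine gap. The step ``without loss of generality $v\in T$, $u\notin T$'' is not justified for a general uncrossable function: the minimal witness set $T$ produced by reverse delete with $\delta_F(T)=\{e\}$ and $f(T)=1$ could equally well contain $u\in B_1$ and exclude $v\in X^i_A$, and you cannot pass to the complement because $f$ need not satisfy $f(S)=f(V\setminus S)$. This is exactly the dichotomy the paper's proof branches over --- all of $\cup_j B_j$ inside $U\cap A$ versus all of it inside $A\setminus U$ --- and only in the second branch does your target condition $T\cap A\subseteq X^i_A$ ever hold.

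The uncrossing step is also flawed beyond the ``delicate bookkeeping'' you flag. First, uncrossability only guarantees that \emph{one} of the two alternatives holds; you are not free to select $f(T\setminus B_j)=f(B_j\setminus T)=1$. Second, and more fundamentally, the problematic situation is $B_j\subseteq T$, which is \emph{not} a proper crossing at all: then $B_j\setminus T=\phi$, so the alternative $f(T\setminus B_j)=f(B_j\setminus T)=1$ is impossible and uncrossability returns the vacuous $f(T\cap B_j)=f(T\cup B_j)=1$, giving no smaller witness. So your plan of shrinking $T$ until $T\cap A\subseteq X^i_A$ cannot succeed in general. The paper instead proves, without modifying $U$, that the original witness already satisfies $U\cap B_j\in\{\phi,B_j\}$ for every $j$: a proper crossing would, via uncrossability, yield a subset of $B_j$ with requirement $1$ that had to be satisfied by an edge internal to $B_j$ added before $B_j$ became minimally unsatisfied, and that edge would be a second member of $\delta_F(U)$. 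Combined with $|\delta_F(U)|=1$ and the tree structure of $H^i_A$, this forces all the $B_j$ to lie on one side of $U$, after which uncrossability is applied to the pair $(A,U)$ and each of the two resulting cases is closed directly. You should replace the uncrossing-and-shrink plan with this ``all or nothing'' argument and carry out both cases of the $(A,U)$ uncrossing rather than assuming $v\in T$.
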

\begin{proof}
Since $A$ is critical, $H^i_A$ is a tree and $X^i_A$ is a leaf node. Let $e$ be the unique edge in $H^i_A$ incident to $X^i_A$. Consider the step in the reverse delete phase when edge $e$ was considered and was retained in $F$ only because its deletion would have caused some set to become unsatisfied. Let $U\subseteq V$ be the minimal such set and note that $e$ is the only edge in $F'$ across $U$ at this step. 

\begin{claim}
$\forall j\in[p_i]$, $U\cap B_j =\phi$ or $U\cap B_j=B_j$.
\end{claim}
\begin{proof}
For a contradiction assume that for some $j\in[p_i]$, $\phi\neq U\cap B_j\subset B_j$. Since $f(B_j)=f(U)=1$ by uncrossability either $f(B_j\cap U)=1$ or $f(B_j\setminus U)=1$. In either case, during the growth phase we must have added an edge, say $g$, to $F$ between $B_j\setminus U$ and $B_j\cap U$ in an iteration before $B_j$ became a minimally unsatisfied set. Thus, in the reverse delete phase when we considered $e$, edge $g$ was in $F$ and hence $e$ was not the only edge across $U$. 
\end{proof}

\begin{figure}
\centering
\includegraphics[scale=0.5]{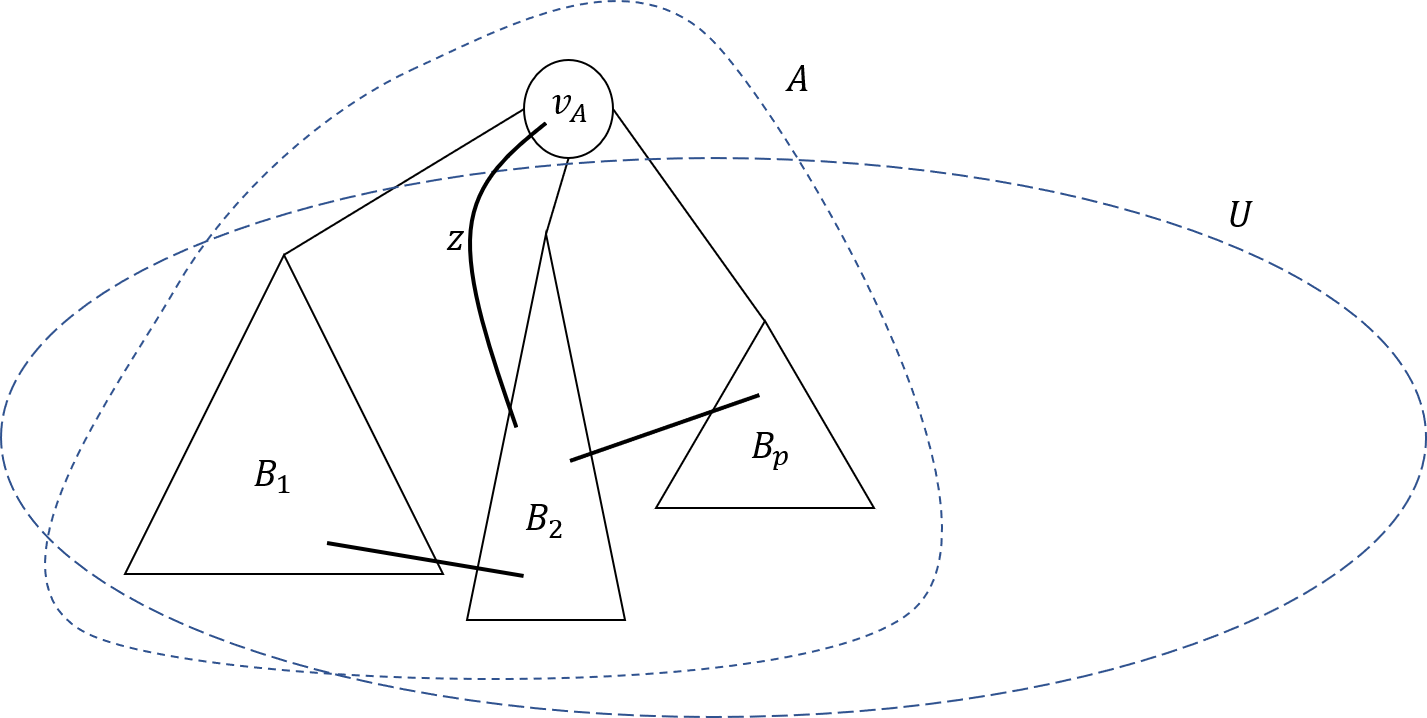}
\caption{\label{fig:notation} Illustrating the notation used. $A$ is a critical set. The thick edges are the edges in $H^i_A$.}
\end{figure}
If $U\cap A$ includes some sets $B_j, j\in[p_i]$ and not the others then the number of edges across the set $U$ will be more than 1. Thus either $\cup_{j\in p_i} B_j\subseteq A\cap U$ or $\cup_{j\in[p_i]} B_j\subseteq A\setminus U$. Since $f(A)=f(U)=1$ by uncrossability we have either $f(A\cap U)=1$ or $f(A\setminus U)=1$. If $\cup_{j\in[p_i]} B_i \subseteq A\cap U$ then $f(A\setminus U)\neq 1$ as that would imply a minimal unsatisfied set in $X^i_A$ which would be a contradiction. Similarly if $\cup_{j\in[p_i]} B_j\subseteq A\setminus U$ then $f(A\cap U)\neq 1$. Hence we need to consider only two cases
\begin{enumerate}
\item $\cup_{j\in[p_i]} B_j \subseteq A\cap U$ and $f(A\cap U)=f(A\cup U)=1$: $F$ should have an edge across the set $A\cup U$. Since the only edge across $U$ goes to $A\setminus U$, there should be an edge across $A$ that is incident to $X^i_A$.
\item $\cup_{j\in[p_i]} B_j \subseteq A\setminus U$ and $f(A\setminus U)=f(U\setminus A)=1$: $F$ should have an edge across the set $U\setminus A$. Since the only edge across $U$ goes from $A\cap U$ to $A\setminus U$, there should be an edge across $A$ that is incident to $A\cap U \subseteq X^i_A$.
\end{enumerate}
Hence in both cases we conclude that there is an edge across $A$ incident to $X^i_A$ which implies $\alpha^i(A)\neq\phi$. 
\end{proof}

\begin{lemma}
\label{lem:twice}
The total degree (in $F$) of sets in $\mcA^i$ is at most twice $\abs{\mcA^i}$.
\end{lemma}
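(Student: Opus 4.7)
My plan is to combine Lemma~\ref{lem:equal} with Lemma~\ref{lem:one}. Since each $A \in \mcA^i$ is minimally unsatisfied, property~3 of $\mcA^i$ gives $\alpha^i(A)=\delta_F(A)$, so the total degree (in $F$) of sets in $\mcA^i$ is exactly $\sum_{A \in \mcA^i} \abs{\alpha^i(A)}$. Since $\mcA^i \subseteq \mcS^i$ and each $\abs{\alpha^i(S)}$ is non-negative, this is at most $\sum_{S \in \mcS^i} \abs{\alpha^i(S)}$, which by Lemma~\ref{lem:equal} is at most $2\abs{\mcA^i} - 2 + \abs{\mcR^i}$. The task thus reduces to absorbing the $\abs{\mcR^i}$ term.

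The main step is the observation that no active set is critical, i.e.\ $\mcR^i \subseteq \mcS^i \setminus \mcA^i$. Indeed, if $A \in \mcA^i$ then no proper subset of $A$ lies in $\mcS^i$, so $p_i=0$ and $H^i_A$ consists of the single isolated node $X^i_A = A$; with no incident edges, this node is not a leaf, and $A$ fails the definition of critical. (This is consistent with the proof of Lemma~\ref{lem:one}, which speaks of ``the unique edge of $H^i_A$ incident to $X^i_A$'' and so tacitly requires that at least one such edge exist.) Consequently Lemma~\ref{lem:one} yields $\abs{\alpha^i(S)} \ge 1$ for every $S \in \mcR^i \subseteq \mcS^i \setminus \mcA^i$, whence
\[
\sum_{S \in \mcS^i \setminus \mcA^i} \abs{\alpha^i(S)} \;\ge\; \abs{\mcR^i}.
\]
Subtracting this from the Lemma~\ref{lem:equal} bound gives $\sum_{A \in \mcA^i} \abs{\alpha^i(A)} \le 2\abs{\mcA^i} - 2 \le 2\abs{\mcA^i}$, which is the desired inequality (in fact with an extra slack of $2$).

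The main subtlety I would want to pin down is the degenerate case of $H^i_A$ being a single vertex: one has to adopt the ``degree-exactly-$1$'' convention for leaves (so that a lone vertex is not a leaf), or equivalently restrict the notion of critical to sets with $p_i \ge 1$. The proof of Lemma~\ref{lem:one} already commits to this convention implicitly, so the exclusion $\mcR^i \cap \mcA^i = \emptyset$ holds and no additional work is needed. Beyond that, the argument is a short bookkeeping step on top of Lemmas~\ref{lem:equal} and~\ref{lem:one}.
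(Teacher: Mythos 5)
Your proof is correct and follows essentially the same route as the paper: you split $\sum_{S\in\mcS^i}\abs{\alpha^i(S)}$ across active sets and the rest, use the fact that $\alpha^i(A)=\delta_F(A)$ for $A\in\mcA^i$, invoke Lemma~\ref{lem:one} to give each critical set a contribution of at least $1$, note that no active set is critical, and cancel $\abs{\mcR^i}$ against the bound from Lemma~\ref{lem:equal}. The only difference is cosmetic: the paper simply asserts that active sets cannot be critical, whereas you spell out why (for $A\in\mcA^i$, $p_i=0$ and $H^i_A$ is an isolated vertex, hence not a leaf), which is a reasonable extra justification rather than a new idea.
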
 
\begin{proof}
A set in $\mcA^i$ cannot be critical in iteration $i$. Further for $S\in\mcA^i$, $\abs{\alpha^i(S)}$ equals the degree of $S$ in $F$. By Lemma~\ref{lem:one} if $A$ is critical in iteration $i$ then $\alpha^i(A)\neq\phi$. Hence $\sum_{S\in\mcS^i} \abs{\alpha^i(S)} \ge \sum_{S\in\mcA^i} \abs{\delta_F(A)}+\abs{\mcR^i}$ where $\mcR^i$ is the collection of critical sets. Applying Lemma~\ref{lem:equal}, we obtain $\sum_{S\in\mcA^i} \abs{\delta_F(A)}\le 2\abs{\mcA^i}-2$ which proves the lemma.
\end{proof}
\remove{
$A$ became minimally unsatisfied only after sets $B_1,B_2,\ldots, B_p$ were satisfied by inclusion of some tight edges in $F$; let $e\in F$ be one such tight edge. Note that $e$ has both endpoints within the set $A$ since if one endpoint of $e$ was not in $A$ then $A$ would have been satisfied when $e$ was added to $F$ and would never become a minimally unsatisfied set.} 

\section{Modifying WGMV}
\label{sec:WGMVmodified}
We now modify the WGMV algorithm so that the duals obtained are half-integral while ensuring that the primal solution has cost at most twice the dual solution. In doing so we are guided by the fact that the GW algorithm constructed half-integral duals since the parity of all vertices in a set was identical. This property does not hold true for the WGMV algorithm as seen in the example in Figure~\ref{fig:counter}. 

As before, let $\mathcal{S}$ be the set of minimally unsatisfied sets during a run of the algorithm. Our modification to the WGMV algorithm involves reducing costs of some edges in $\delta(S), S\in\mcS$ by 1/2. Let $\delta'(S)\subseteq\delta(S)$ denote the subset of edges whose cost was reduced by 1/2 when considering $S$. We now define the {\em parity} of an edge $e$ with respect to a set $S\in\mcS, e\in\delta(S)$ as $$\pi_e(S)=\set{\sum_{e\in\delta(T),T\subseteq S} y_T +\frac{1}{2}\abs{\set{T\subseteq S|e\in\delta'(T)}}}$$ 
where as before \set{x} denotes the fractional part of $x$. Our modification to the WGMV procedure is:
\begin{quote}
Let $S$ be a set which becomes minimally unsatisfied at time $t$ and let $x\in S$ be an active vertex of set $S$. Then $\pi_x(S)=\set{t}$. For edge $e\in\delta(S)$, if $\pi_e(S)\neq\set{t}$ then decrease $c_e$ by 1/2 (note $e$ gets included in $\delta'(S)$).  
\end{quote}
We decrease the costs of edges in $\delta(S)$ in the above manner only when $S$ becomes minimally unsatisfied and need to argue that the total cost of edges in $F$ can still be bounded by twice the sum of the dual variables. Our modification allows us the following claim.
\begin{claim}
$\forall S\in\mcS$, $\forall e,f\in\delta(S)$, $\pi_e(S)=\pi_f(S)$
\end{claim}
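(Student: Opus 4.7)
The plan is to prove this claim by strong induction on the order in which sets in $\mcS$ become minimally unsatisfied, strengthening the inductive hypothesis to include (a) the stated parity equality for all previously processed sets, (b) half-integrality of every $y_T$ set so far, and (c) half-integrality of every time at which a set was processed or an edge became tight. Let $S$ be the next set to be processed, at time $t$.

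The key reformulation is the following observation. For an edge $e = (u,v) \in \delta(S)$ with $u \in S$ and $v \notin S$, and for any $T \subseteq S$, the condition $e \in \delta(T)$ is equivalent to $u \in T$ (because $v \notin S \supseteq T$ forces $v \notin T$). Writing $\pi_e^{(0)}(S)$ for the parity of $e$ with respect to $S$ just before $S$ is processed, we therefore have
\[
\pi_e^{(0)}(S) \;\equiv\; \pi_u(S) \;+\; \tfrac{1}{2}\,k_e \pmod{1},
\]
where $\pi_u(S) = \left\{\sum_{T \in \mcS,\;T \subseteq S,\;u \in T} y_T\right\}$ is the natural vertex analog and $k_e = \left|\{T \in \mcS,\;T \subsetneq S : e \in \delta'(T)\}\right|$ is an integer. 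By hypothesis (b), $\pi_u(S) \in \set{0,\tfrac12}$; since $k_e \in \Z$, also $\tfrac12 k_e \in \set{0,\tfrac12}$ mod $1$; and by (c), $\set{t} \in \set{0,\tfrac12}$. Combining, for every $e \in \delta(S)$ we get $\pi_e^{(0)}(S) \in \set{\set{t},\,\set{t} + \tfrac12 \bmod 1}$.

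The modification then aligns everything: if $\pi_e^{(0)}(S) = \set{t}$ already, $e$ is not placed in $\delta'(S)$; otherwise $e$ is placed in $\delta'(S)$, which shifts its parity by exactly $\tfrac12$ mod $1$, sending $\set{t} + \tfrac12$ back to $\set{t}$. Thus just after $S$ is processed, $\pi_e(S) = \set{t}$ uniformly over $e \in \delta(S)$, proving the claim for $S$ at this instant. To see that equality persists, note that $\pi_e(S)$ depends only on $y_T$ and $\delta'(T)$ for $T \subseteq S$. When $S$ becomes minimally unsatisfied at $t$, every proper subset of $S$ with $f=1$ already has an edge of $F$ across it, and edges are never removed during the growth phase; hence no proper subset of $S$ is ever again unsatisfied, so $y_T$ and $\delta'(T)$ for $T \subsetneq S$ are frozen. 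The only remaining change is raising $y_S$ while $S$ is active, which adds the same increment to $\pi_e(S)$ for every $e \in \delta(S)$, preserving the common value.

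The main obstacle is propagating hypotheses (b) and (c) forward. When a later edge $g$ becomes tight with $k$ active sets crossing it, the required dual increment is $(\text{effective cost of }g - \text{current dual sum across }g)/k$, which is automatically half-integral for $k=1$ but requires the numerator to be an integer when $k=2$; this is the uncrossable analog of the GW parity argument in Section~\ref{sec:proper}, and has to be threaded through the induction in parallel with the parity claim we have just sketched, using that both endpoints then lie in active sets whose edge parities have been aligned by the modification.
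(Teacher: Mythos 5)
The paper states this claim without an explicit proof, treating it as a consequence of the modification procedure by construction. Your proposal correctly unpacks why the modification works: the reformulation $\pi_e^{(0)}(S) \equiv \pi_u(S) + \tfrac{1}{2}k_e \pmod 1$ is right (since for $T\subseteq S$, $e\in\delta(T)$ iff $u\in T$), and your persistence argument — that once $S$ is processed, $y_T$ and $\delta'(T)$ for $T\subsetneq S$ are frozen and the only further change, raising $y_S$, affects all edges of $\delta(S)$ equally — is sound. Your key insight, that the modification can only force $\pi_e(S)=\set{t}$ when the gap is $0$ or $\tfrac12$ modulo $1$, and that this in turn requires half-integrality of the duals and of the time $t$ itself, is exactly the right diagnosis.

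However, your proposal has a genuine gap, which you acknowledge but do not close: the propagation of hypotheses (b) and (c). Without half-integrality of $\set{t}$ and of all $y_T$ for $T\subsetneq S$, the shift by $\tfrac12$ may overshoot or undershoot $\set{t}$, and the claim fails outright — it is not a technicality but the whole content. In the case where a tight edge $g$ is crossed by two active sets, you need the residual cost of $g$ minus the current dual across $g$ to be an integer, which (as you note) is the analogue of the merge argument in Section~\ref{sec:proper}. But in the WGMV setting the two active sets do not simply merge; new minimally unsatisfied sets can appear that contain both, or neither persists, and the parity of each endpoint with respect to the relevant laminar chain must be tracked through the repeat loop of Algorithm~\ref{alg:modified}, where further $\tfrac12$ cost reductions can trigger new tight edges within the same time step. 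The sketch you give (\enquote{has to be threaded through the induction in parallel}) names the obstacle correctly but does not resolve it; a complete proof would need to show, simultaneously with (a), that every residual cost remains a half-integer (clear, since reductions are by $\tfrac12$), that every $\pi_u(S)$ is a half-integer, and that every dual increment — including the $k=2$ case — is a half-integer, using the alignment already established for earlier sets.
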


When we increase dual variables of sets in $\mcA^i$ in iteration $i$, one or more edges go tight and these are added to a set $T$. Let $t^i$ be the time at which we stop growing dual variables of sets in $\mcA^i$. The edges of $T$ are considered in an arbitrary order and $e\in T$ is added to $F$ if it crosses a minimally unsatisfied set. Note that whenever an edge is added to $F$ the collection of minimally unsatisfied sets is recomputed. Let $\mcC$ be the collection of minimally unsatisfied sets after all edges in $T$ have been considered. For every $S\in\mcC$ and every edge $e\in\delta(S)$, if $\pi_e(S)\neq\set{t}$ then we reduce the cost of edge $e$ by 1/2. All edges that go tight after this step are included in $T$ and the process repeated until no edge gets added to $T$. The minimally unsatisfied sets at this stage are the active sets, $\mcA^{i+1}$ for iteration $i+1$. 

\begin{algorithm}
\caption{Modification to an iteration of the WGMV algorithm}
\label{alg:modified}
\begin{algorithmic}[1]
\State $\mcC$ is the collection of minimally unsatisfied sets with respect to $F$.
\State $T$ is the set of tight edges which have not yet been included in $F$.
\Repeat 
\ForAll{$e\in T$}
\If{$\exists C\in\mcC, e\in\delta(C)$} 
\State $F\leftarrow F\cup\set{e}$; compute $\mcC$
\EndIf
\EndFor
\State $T\leftarrow\phi$
\ForAll{$C\in\mcC$}
\ForAll{$e\in\delta(C)$}
\If{$\pi_e(C)\neq\set{t}$}
\State $c_e\leftarrow c_e-1/2$
\If{$e$ is tight}
\State $T\leftarrow T\cup\set{e}$
\EndIf
\EndIf
\EndFor
\EndFor
\Until {$T=\phi$}
\end{algorithmic}
\end{algorithm}
Let $\mcC^i$ be the collection of sets in $\mcS$ which properly contain a set in $\mcA^i$ and are subsets of some set in $\mcA^{i+1}$. Formally,  $\mcC^i=\set{S\in\mcS|\exists A\in\mcA^i,\exists B\in\mcA^{i+1}, A\subset S\subseteq B}$. Note that 
\begin{enumerate}
    \item $\mcA^{i+1}\setminus \mcA^i\subseteq \mcC^i$,
    \item $\mcA^i\cap\mcC^i=\phi$,
    \item any edge whose cost is reduced by 1/2 in iteration $i$ goes across a set in $\mcC^i$,
    \item $\mcC^i\cap\mcC^{i+1}=\phi$ for $i\in[T-1]$
\end{enumerate} 

Before $A\in\mcC^i$ was considered in iteration $i$ we would have considered the sets in $\mcS^i$ corresponding to children of node $v_A$ in tree $\mcT$. Let $B_j, j\in[p_i]$ be these sets and note that they belong to $\mcC^i\cup\mcA^i$. For each $B_j, j\in[p_i]$ we would already have reduced the cost of edges $e\in\delta(B_j)$ if $\pi_e(B_j)\neq\set{t^i}$.  Hence when considering $A$ we would only be reducing the cost of edges in $\delta(A)$ which are incident to $A\setminus\cup_{j\in[p_i]} B_j=X^i_A$. Thus the edges of $F$ whose cost is reduced when considering $A\in\mcC^i$ are subsets of $\alpha^i(A)$, let this subset be $\beta^i(A)$.

 After iteration $i$, (reduced) cost of an edge $e$ is $c_e-\sum_{S:e \in \delta(S)}y_S$, where $y$ is the dual value after iteration $i$. Note that as the algorithm proceeds, (reduced) cost of edges decrease. To prove that the modified WGMV procedure gives a 2-approximate solution, we bound the total reduction in costs of edges in $F$ by twice the total increase in the value of dual variables. In iteration $i$, the total reduction in edge costs of $F$ due to increase of dual variables of sets $A^i$ equals $\gamma^i\sum_{S\in\mcA^i} \abs{\delta_F(S)}=\gamma^i\sum_{S\in\mcA^i} \abs{\alpha^i(S)}$, where $\gamma^i=t^i-t^{i-1}$ is the increase in the dual variable of a set in $\mcA^i$. The other reduction occurs when we reduce by 1/2 the costs of edges due to parity considerations. The total reduction in the cost of edges of $F$ due to this reason is at most $1/2\sum_{S\in\mcC^i}\abs{\beta^i(S)}$.  

To prove the approximation guarantee of WGMV, authors in \cite{WGMV} show that in every iteration the total reduction in cost of edges in $F$ is at most twice the total increase in dual values in that iteration. To prove the approximation guarantee of modified WGMV, we need to charge the reduction in edge costs across iterations. To do this, we introduce a procedure for marking and unmarking sets. All sets are unmarked before the first iteration of the algorithm. In the first iteration a set $A\in\mcS$ is not marked 
\begin{enumerate}
\item if $A$ is critical or,
\item if node $v_A$ exhausts all its tokens and $\alpha^1(A)=\phi$ 
\end{enumerate}
All other sets in $\mcS$ are marked in iteration 1. Let $M$ be the number of sets which are marked. 

In iteration $i$ we unmark a set $S\in\mcC^i$ if it is critical and $\beta^i(S)\neq\phi$. Let $M_i$ be the number of sets unmarked in iteration $i$. In Lemma~\ref{lem:marking} we argue that we unmark a set only if it has a mark on it.

\begin{lemma}\label{lem:modified}
In any iteration $i>1$, $$\gamma^i\sum_{S\in\mcA^i}\abs{\alpha^i(S)} + (1/2)\sum_{S\in\mcC^i}\abs{\beta^i(S)} - M_i/2 \le 2\gamma^i(\abs{A_i}-1)$$
\end{lemma}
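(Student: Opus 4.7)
The plan is to decompose the left-hand side and bound each piece using the structural results already at hand. First, apply Lemma~\ref{lem:twice} directly to obtain
$\gamma^i\sum_{S\in\mcA^i}\abs{\alpha^i(S)} \le \gamma^i(2\abs{\mcA^i}-2) = 2\gamma^i(\abs{\mcA^i}-1).$
The claim therefore reduces to showing $\sum_{S\in\mcC^i}\abs{\beta^i(S)} \le M_i$.

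To establish this reduced inequality, I would develop a parity-propagation invariant for the modified WGMV in the spirit of Section~\ref{sec:proper}: at the moment a set $S\in\mcS$ becomes minimally unsatisfied, all vertices of the fresh region $X^i_S$ share a common parity with respect to $S$, and this common parity is exactly what Algorithm~\ref{alg:modified} compares against $\set{t^i}$. Using this invariant, I would argue two structural facts. First, if $S\in\mcC^i$ is non-critical, the multiple independent connecting paths within $H^i_S$ together with the tightness of the connecting edges (whose original costs are integral, possibly reduced by half-integral amounts for earlier sets) force $\pi_e(S)=\set{t^i}$ for every $e\in\alpha^i(S)$, so $\beta^i(S)=\phi$. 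Second, if $S\in\mcC^i$ is critical, the unique edge of $H^i_S$ incident to the leaf node $X^i_S$ admits at most one parity shift between the two sides of $X^i_S$, yielding $\abs{\beta^i(S)}\le 1$.

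Finally, the unmarking rule records each critical $S\in\mcC^i$ with $\beta^i(S)\neq\phi$ (which, by Lemma~\ref{lem:marking}, must have been marked in iteration~$1$ and therefore is available to be unmarked), so each such set contributes exactly $1$ to $M_i$. Combining the two structural facts with this rule gives $\sum_{S\in\mcC^i}\abs{\beta^i(S)} \le \abs{\set{S\in\mcC^i : S\text{ critical and }\beta^i(S)\neq\phi}} \le M_i$, completing the proof.

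The main obstacle is verifying the second structural fact, namely the uniqueness of the parity mismatch for a critical $S\in\mcC^i$. This requires carefully tracking parity as half-integral cost reductions accumulate through earlier iterations and showing that the modification step preserves parity uniformity of vertices in $X^i_S$; this is a delicate adaptation of the proper-function argument of Section~\ref{sec:proper} to the uncrossable setting, where the laminar family $\mcS$ is richer and where reductions made for descendant sets must be correctly attributed when computing $\pi_e(S)$.
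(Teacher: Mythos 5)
Your decomposition loses the crucial slack that the paper's proof actually relies on, and the resulting reduced claim is both much stronger than needed and appears unprovable in general.

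By invoking Lemma~\ref{lem:twice} up front you use all of the slack in Lemma~\ref{lem:equal} just to handle the $\mcA^i$ term, discarding the contributions $\gamma^i\abs{\alpha^i(S)}$ for $S\in\mcC^i$. Those contributions are exactly what the paper uses to pay for the $\beta^i(S)$ reductions: since $\beta^i(S)\subseteq\alpha^i(S)$ and $\gamma^i\ge 1/2$, each non-critical $S\in\mcC^i$ satisfies $\gamma^i\abs{\alpha^i(S)}\ge(1/2)\abs{\beta^i(S)}$ \emph{automatically}, with no parity argument whatsoever. The paper decomposes $\sum_{S\in\mcS^i}\abs{\alpha^i(S)}$ (bounded by Lemma~\ref{lem:equal}, which has the $\abs{\mcR^i}$ term) into the $\mcA^i$, $\mcC^i$, and remaining pieces, uses Lemma~\ref{lem:one} to cover the $\abs{\mcR^i\setminus\mcC^i}$ part, and uses the $M_i/2$ credit only for the critical $S\in\mcC^i$ with $\beta^i(S)\neq\phi$, where an extra $\gamma^i$ is needed but $\abs{\alpha^i(S)}\ge\abs{\beta^i(S)}\ge 1$ makes the bookkeeping close. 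At no point does the paper need $\abs{\beta^i(S)}\le 1$ or $\beta^i(S)=\phi$.

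Your reduced goal $\sum_{S\in\mcC^i}\abs{\beta^i(S)}\le M_i$ therefore requires both structural facts you list, and both are doubtful. For the first, a non-critical $S$ may have $H^i_S$ \emph{disconnected}; there is then no path in $H^i_S$ joining some components, so no parity constraint propagates between them, and nothing forces $\pi_e(S)=\set{t^i}$ for edges incident to the isolated parts of $X^i_S$. Moreover $X^i_S$ can contain vertices that belonged to sets of $\mcS$ with label $<i$, so their accumulated parities need not agree even within $X^i_S$. For the second fact, criticality gives a single $H^i_S$-edge into $X^i_S$ but places no upper bound on $\abs{\delta_F(S)\cap\alpha^i(S)}$; several edges of $\delta_F(S)$ incident to $X^i_S$ could all have the wrong parity and all be reduced, so $\abs{\beta^i(S)}$ may exceed $1$. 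You flag the second fact as the "main obstacle," but the first is equally in doubt, and neither appears in the paper because the paper's accounting never needs them.
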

\begin{proof}
Recall $\mcR^i$ is the collection of critical sets in iteration $i$.
\begin{equation}\label{eq:one}
\gamma^i\sum_{S\in\mcS^i} \abs{\alpha^i(S)} \ge  \gamma^i\sum_{S\in\mcA^i} \abs{\alpha^i(S)} + \gamma^i\sum_{S\in\mcC^i} \abs{\alpha^i(S)} + \gamma^i\sum_{S\in\mcS^i\setminus\mcA^i\cup\mcC^i} \abs{\alpha^i(S)}
\end{equation}
By Lemma~\ref{lem:one} we obtain
\begin{equation}\label{eq:two}
\gamma^i\sum_{S\in\mcS^i\setminus\mcA^i\cup\mcC^i} \abs{\alpha^i(S)} \ge  \gamma^i\abs{\mcR^i\setminus\mcC^i}
\end{equation}
and the unmarking procedure gives
\begin{equation}\label{eq:three}
\gamma^i\sum_{S\in\mcC^i} \abs{\alpha^i(S)}+M_i/2 \ge  (1/2)\sum_{S\in\mcC^i} \abs{\beta^i(S)} + \gamma^i\abs{\mcR^i\cap\mcC^i}
\end{equation}
Inequality \ref{eq:three} holds since 
\begin{enumerate}
\item if $S$ is not critical it contributes $\gamma^i\abs{\alpha^i(S)}$ to the left and $\abs{\beta^i(S)}$ to the right and $\beta^i(S)\subseteq\alpha^i(S)$.
\item if $S$ is critical but $\beta^i(S)=\phi$ then $S$ contributes $\gamma^i\abs{\alpha^i(S)}$ to the left and $\gamma^i$ to the right and $\alpha^i(S)\neq\phi$.
\item if $S$ is critical and $\beta^i(S)\neq\phi$ then $S$ contributes $\gamma^i\abs{\alpha^i(S)}+1/2$ to the left and $(1/2)\abs{\beta^i(S)}$ $+ \gamma^i$ to the right. Since $\phi\neq\beta^i(S)\subseteq\alpha^i(S)$ and $\gamma^i\ge 1/2$, the contribution to the left is more than the contribution of $S$ to the right.
\end{enumerate}
Adding inequalities \ref{eq:one}, \ref{eq:two} and \ref{eq:three} we get
\begin{equation}\label{eq:four}
\gamma^i\sum_{S\in\mcS^i} \abs{\alpha^i(S)} \ge  \gamma^i\sum_{S\in\mcA^i} \abs{\alpha^i(S)} + \gamma^i\sum_{S\in\mcC^i} \abs{\beta^i(S)} + \gamma^i\abs{\mcR^i} -M_i/2
\end{equation}
Inequality \ref{eq:four} when combined with the inequality in Lemma~\ref{lem:equal} and together with the fact that $\gamma^i\ge 1/2$ proves the lemma.
\end{proof}

Iteration 1 differs from other other iterations since we mark sets in this iteration. For iteration 1 we make the following claim.
\begin{lemma}\label{lem:modified1}
$$\gamma^1\sum_{S\in\mcA^1}\abs{\alpha^1(S)} + (1/2)\sum_{S\in\mcC^1}\abs{\alpha^1(S)} + (1/2)(M-M_1)\le 2\gamma^1(\abs{A_1}-1)$$
\end{lemma}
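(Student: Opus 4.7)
The plan is to adapt the proof of Lemma~\ref{lem:modified}, incorporating the extra accounting required because iteration~1 is where the marking procedure is initiated. I would begin by invoking Lemma~\ref{lem:equal} for $i=1$ and scaling by $\gamma^1$ to get $\gamma^1 \sum_{S \in \mcS^1} |\alpha^1(S)| \le 2\gamma^1(|\mcA^1|-1) + \gamma^1|\mcR^1|$. Partition $\mcS^1$ as $\mcA^1 \sqcup \mcC^1 \sqcup \mathcal{D}^1$ with $\mathcal{D}^1 := \mcS^1 \setminus (\mcA^1 \cup \mcC^1)$, split the left-hand sum, and obtain the analog of inequality~\ref{eq:two} via Lemma~\ref{lem:one}: $\gamma^1 \sum_{S\in\mathcal{D}^1}|\alpha^1(S)| \ge \gamma^1|\mcR^1\setminus\mcC^1|$.

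The main technical step is an analog of inequality~\ref{eq:three} tailored to iteration~1:
\[
\gamma^1\sum_{S\in\mcC^1}|\alpha^1(S)| + M_1/2 \;\ge\; (1/2)\sum_{S\in\mcC^1}|\alpha^1(S)| + \gamma^1|\mcR^1\cap\mcC^1| + M^{\mcC}/2,
\]
where $M^{\mcC}$ denotes the number of marks placed on sets in $\mcC^1$ in iteration~1. The proof is the same three-way case analysis on $S\in\mcC^1$ as in Lemma~\ref{lem:modified}: $S$ non-critical, $S$ critical with $\beta^1(S)=\phi$, and $S$ critical with $\beta^1(S)\neq\phi$. The stronger right-hand side (using $|\alpha^1(S)|$ rather than $|\beta^1(S)|$) is affordable because $\beta^1(S)\subseteq\alpha^1(S)$ and $\gamma^1\ge 1/2$, leaving a slack of $(\gamma^1-1/2)|\alpha^1(S)|$ per term that absorbs the extra $+M^{\mcC}/2$ debt; the fact that a set in $\mcC^1$ is marked only when its $\alpha^1$ is nonempty or its token slack is strictly positive is what ensures this slack exists.

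Marks placed outside $\mcC^1$, i.e., on sets in $\mcA^1 \cup \mathcal{D}^1$ contributing the remaining $(M-M^{\mcC})/2$ debt, are absorbed by the slack in Lemma~\ref{lem:equal} itself. For each such marked $A$, either $v_A$ exhibits strictly positive token slack in the token argument of Lemma~\ref{lem:equal} — which sharpens that lemma's inequality by the corresponding amount and frees up room on the right-hand side — or $\alpha^1(A)\neq\phi$, in which case the edge is already accounted for either in $\gamma^1\sum_{\mcA^1}|\alpha^1|$ or in the Lemma~\ref{lem:one} bound on $\mathcal{D}^1$. Summing the $\mcC^1$ analog, the $\mathcal{D}^1$ analog, the mark-absorption bound for $\mcA^1\cup\mathcal{D}^1$, and the scaled Lemma~\ref{lem:equal}, then rearranging, yields the claimed inequality.

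The hardest part will be the tight regime $\gamma^1=1/2$, where the $(\gamma^1-1/2)|\alpha^1(S)|$ slack used in the first case of the $\mcC^1$-analog vanishes and all marks must be paid strictly by unused tokens and the critical-set bonus. Verifying that the marking rule places marks only on sets where one of these two sources is genuinely available, and that there is no double-counting between the $\mcC^1$ credits and the token-slack credits used for $\mcA^1 \cup \mathcal{D}^1$, is the technical crux of the argument.
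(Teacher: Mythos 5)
Your skeleton matches the paper's: partition $\mcS^1$ into $\mcA^1$, $\mcC^1$, $\mathcal{D}^1 := \mcS^1\setminus(\mcA^1\cup\mcC^1)$, write analogs of inequalities~\ref{eq:one}--\ref{eq:three}, then combine with Lemma~\ref{lem:equal}. But the place where you charge the mark debt is different from where the paper does, and this difference is where your argument breaks.

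The paper keeps inequality~\ref{eq:three} completely unchanged for iteration~1, and instead modifies inequality~\ref{eq:two}: letting $m$ be the number of marked sets that are non-critical with $\alpha^1\neq\phi$, it claims
\[\gamma^1\sum_{S\in\mathcal{D}^1}\abs{\alpha^1(S)}\ \ge\ \gamma^1\abs{\mcR^1\setminus\mcC^1}+m/2.\]
Here the credit $m/2$ comes directly from $\alpha^1(S)\neq\phi$ for the non-critical marked sets, which live disjointly from the critical ones on the right. The remaining $M-m$ marks (on sets whose node $v_A$ has unused tokens) are paid by a strengthened version of Lemma~\ref{lem:equal}, namely $\sum_{S\in\mcS^1}\abs{\alpha^1(S)} + (M-m)\le 2(\abs{\mcA^1}-1)+\abs{\mcR^1}$, since each unexhausted node leaves at least one token unspent in the counting argument. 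Summing the pieces then yields the lemma with no slack to spare when $\gamma^1=1/2$.

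You instead keep inequality~\ref{eq:two} as in Lemma~\ref{lem:modified} (with no $m/2$) and try to place $M^{\mcC}/2$ on the right-hand side of the $\mcC^1$-analog of~\ref{eq:three}. But that analog already has zero slack in the non-critical case when $\gamma^1=1/2$ and $\beta^1(S)=\alpha^1(S)$: a non-critical marked $S\in\mcC^1$ contributes $\gamma^1\abs{\alpha^1(S)}=\frac12\abs{\alpha^1(S)}$ to the left and $\frac12\abs{\alpha^1(S)}$ to the right, leaving nothing to cover the additional $+1/2$ you want. You flag the $\gamma^1=1/2$ regime as ``the hardest part,'' but this is precisely where the approach as stated fails rather than merely being delicate. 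Similarly, for marked $A\in\mcA^1\cup\mathcal{D}^1$ with $\alpha^1(A)\neq\phi$, your claim that the debt is ``already accounted for in $\gamma^1\sum_{\mcA^1}\abs{\alpha^1}$'' is the wrong sign: that term sits on the left of the target inequality as a \emph{cost}, so it cannot fund the $M/2$ debt. For sets in $\mathcal{D}^1$ the credit must come from the \ref{eq:two}-analog itself (this is exactly the paper's $+m/2$ term you omitted), and for sets in $\mcA^1$ the resolution is that the marking rule should not place marks there at all (such $v_A$ are leaves with no tokens and never enter any $\mcC^i$, so they are irrelevant). The fix is to move your mark-accounting out of the $\mcC^1$-analog of~\ref{eq:three} and into the $\mathcal{D}^1$ bound and the token-count of Lemma~\ref{lem:equal}, exactly as the paper does.
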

\begin{proof}
Inequalities \ref{eq:one} and \ref{eq:three} remain unchanged for iteration 1 (with 1 replacing $i$) while inequality \ref{eq:two} is modified due to the marks placed on sets.
$A$ is marked if it is not critical and $\alpha^1(A)\neq\phi$; let $m$ be the number of such sets. This together with Lemma~\ref{lem:one} gives
\begin{equation}\label{eq:modtwo}
\gamma^1\sum_{S\in\mcS^1\setminus\mcA^1\cup\mcC^1} \abs{\alpha^1(S)} \ge  \gamma^1\abs{\mcR^1\setminus\mcC^1}+m/2
\end{equation}
Adding inequalities~\ref{eq:one}, \ref{eq:three} (with $i=1$) and inequality \ref{eq:modtwo} we get
\begin{equation}\label{eq:modfour}
\gamma^1\sum_{S\in\mcS^1} \abs{\alpha^1(S)} \ge  \gamma^1\sum_{S\in\mcA^1} \abs{\alpha^1(S)} + \gamma^1\sum_{S\in\mcC^1} \abs{\beta^1(S)} + \gamma^1\abs{\mcR^1} +(1/2)(m-M_1)
\end{equation}
Recall that we also mark a set $A$ when node $v_A$ does not exhaust all its tokens. Note that the number of such sets is $M-m$ and hence the inequality on Lemma~\ref{lem:equal} becomes
 \begin{equation}\label{eq:equal}
\sum_{S\in\mcS^1} \abs{\alpha^1(S)} + M-m \le 2(\abs{\mcA^1}-1)+\abs{\mcR^1}
\end{equation}
Combining inequality \ref{eq:modfour} and inequality~\ref{eq:equal} and using the fact that $\gamma^i\ge 1/2$ proves the lemma.
\end{proof}

\remove{
\begin{enumerate}
\item Once a set is marked in an iteration it remains marked until it is unmarked in a subsequent iteration. A set once unmarked is not marked again.
\item A set $S\in\mcS^i\setminus\mcA^i\cup\mcC^i$ is {\em marked} in iteration $i$ if $S$ is not critical in iteration $i$ and $\alpha^i(S)\ge 1$.
\item In iteration $i$ we unmark a set $S\in\mcC^i$ if it is critical and $\beta^i(S)\ge 1$. In Lemma~\ref{lem:marking} we argue that we unmark a set only if it has a mark on it.
\end{enumerate}}

Summing the inequality in Lemma~\ref{lem:modified1} and  Lemma~\ref{lem:modified}  over all iterations gives us 
\begin{equation*}
\sum_{i\in[T]}\left(\gamma^i\sum_{S\in\mcA^i}\alpha^i(S) + (1/2)\sum_{s\in\mcC^i}\beta^i(S) - M_i/2\right) + M/2 \le \sum_{i\in[T]} 2\gamma^i(\abs{A_i}-1)
\end{equation*}
Since we unmark a set only if it has been marked in iteration 1 (Lemma \ref{lem:marking}), $M\ge\sum_{i\in[t]} M_i$. Therefore, the total reduction in the cost of the edges of $F$ over all iterations (which is the total cost of edges in $F$) is at most the quantity on the left of the above inequality. Hence the cost of the solution $F$ is at most twice the total dual raised over all iterations and this completes the proof of Theorem \ref{thm : half integer WGMV}.

It remains to show that a set is unmarked only if it has been marked in iteration 1.
 \begin{lemma}\label{lem:marking}
 If $A\in\mcC^i$ is critical in iteration $i$ but not marked in iteration 1 then $\beta^i(A)=\phi$.
 \end{lemma}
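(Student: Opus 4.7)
My plan is to prove Lemma~\ref{lem:marking} by contradiction: assume $A \in \mcC^i$ is critical in iteration $i$ yet $\beta^i(A) \neq \phi$, and deduce that $A$ must have been marked in iteration $1$, i.e., neither of the two unmarking conditions in iteration $1$ holds (namely, (a) $A$ is critical in iteration $1$, or (b) $v_A$ exhausts all its tokens in iteration $1$ with $\alpha^1(A) = \phi$).

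First I would extract an edge $e = (u,v) \in \beta^i(A)$, so $u \in X^i_A$, $v \notin A$, and $\pi_e(A) \neq \{t^i\}$ at the moment $A$ becomes minimally unsatisfied. Using that an active vertex $x \in A$ satisfies $\pi_x(A) = \{t^i\}$, together with the identity
\[
\pi_e(A) \;\equiv\; \pi_u(A) \;+\; \tfrac{1}{2}\bigl|\{T \subseteq A : e \in \delta'(T)\}\bigr| \pmod{1},
\]
the hypothesis $\pi_e(A) \neq \{t^i\}$ translates into a concrete parity discrepancy between $u$ and $x$ along their ancestor chains in the laminar tree $\mcT$. Next I would split on the location of $u$ inside $X^i_A = X^1_A \cup \bigcup_{j:\, l(B_j) < i} B_j$. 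If $u \in X^1_A$, then $e \in \alpha^1(A)$ immediately witnesses $\alpha^1(A) \neq \phi$, ruling out case (b); to rule out case (a), I would exploit that both $H^1_A$ and $H^i_A$ are trees with $X^1_A$, $X^i_A$ respectively as leaves, and argue via the uniqueness of the leaf edge in $H^1_A$ that the parity of $u$ is forced to match that of the active vertex, contradicting the choice of $e$. If instead $u$ lies in a child $B_j$ with $l(B_j) < i$, then the modification guarantees $\pi_e(B_j) = \{t^{l(B_j)}\}$ at the time $B_j$ became minimally unsatisfied, and I would propagate this forward by summing the dual increments and $\delta'$-memberships of the sets strictly between $B_j$ and $A$ in $\mcT$, showing that the net shift from $t^{l(B_j)}$ to $t^i$ forces $\pi_e(A) = \{t^i\}$ unless $A$'s iteration-$1$ structure was already ``slack'' (in the sense that leaves it marked).

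The main obstacle I anticipate is this parity bookkeeping along the ancestor chain of $u$ inside $A$: each intermediate $S \in \mcS$ on the chain contributes both to the running dual sum and possibly a $\delta'$-reduction, and I must verify — using that $A$ critical in iteration $i$ means $H^i_A$ has exactly one edge from $X^i_A$ — that the accumulated contribution cannot avoid $\{t^i\}$ without $A$ failing both unmarking conditions. The delicate boundary case is (b), the token-exhaustion condition with $\alpha^1(A) = \phi$, where I expect the proof to require the tightest interaction between the iteration-$1$ token distribution on $H^1_A$ and the contracted tree $H^i_A$ in order to exhibit the necessary contradictory edge.
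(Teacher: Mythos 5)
Your overall strategy---a contrapositive version of the paper's direct argument, splitting on where $u$ sits inside $X^i_A$, and invoking the iteration-$1$ structure of $H^1_A$ that the unmarking conditions force---is the same as the paper's. The genuine gap is in the propagation mechanism you describe for the case $u\in B_j$ with $l(B_j)<i$. You propose to recover $\pi_e(A)$ from $\pi_e(B_j)$ ``by summing the dual increments and $\delta'$-memberships of the sets strictly between $B_j$ and $A$ in $\mcT$.'' But $B_j$ is a \emph{child} of $A$ in the laminar tree $\mcT$, so there are no such intermediate sets: that sum is empty, and all you get is $\pi_e(A)=\pi_e(B_j)=\{t^{l(B_j)}\}$ unconditionally. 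Nothing in the vertical chain of $\mcT$ makes this equal to $\{t^i\}$ when $l(B_j)<i$, so there is no ``net shift'' to exhibit. (Also, as a minor point, $\pi_e(B_j)$ equals $\{t^{l(B_j)}\}$ only after $B_j$ has been satisfied; at the moment $B_j$ becomes minimally unsatisfied, the modification pins it to the fractional part of that earlier time.)

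The propagation the lemma actually needs is \emph{lateral}, through the tight edges of $F$ that form $H^1_A$, not vertical through $\mcT$. For a tight edge $g$ of $H^1_A$ joining $B_r$ and $B_s$, integrality of $c_g$ gives $\pi_g(B_r)+\pi_g(B_s)\equiv 0\pmod 1$, and combined with the invariant that all edges of $\delta(B_r)$ share a common parity this forces $\pi(B_r)=\pi(B_s)$. The unmarking conditions are exactly what makes $H^1_A$ connect the children: if $A$ is critical in iteration $1$, $H^1_A$ is a spanning tree with $X^1_A$ a leaf; if $v_A$ exhausts its tokens with $\alpha^1(A)=\phi$, then $H^1_A$ is either a spanning tree or a tree on just the $B_j$'s. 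In every case the $B_j$'s are connected, so all $\pi(B_j)$ coincide with $\pi(B_1)=\{t^i\}$, which is what kills $\beta^i(A)$. Your $u\in X^1_A$ branch is closer to the mark---there $\pi_e(A)=0$ since no set of $\mcS$ strictly inside $A$ contains $u$, and the tight $H^1_A$-edge leaving $X^1_A$ (which exists in precisely the unmarked cases where $\alpha^1(A)$ could be nonempty) forces $\pi(B_s)=0$ and hence $\{t^i\}=0$---but what that edge pins is $\{t^i\}$ (via the common $B_j$-parity), not ``the parity of $u$,'' and the argument should again be phrased in terms of $H^1_A$ rather than $\mcT$.
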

 \begin{proof}
 Let $\set{B_j|j\in[p]}$ be the sets corresponding to children of $v_A$ and $X^1_A=A\setminus\cup_{j\in[p]} B_j$. If $H^1_A$ has a tree spanning nodes corresponding to sets $X^1_A$, $B_j, j\in[p]$, then edges of $\delta(A)$ would have equal parity. If $A$ becomes a minimal unsatisfied set at time $t$ then $B_1$ was active till time $t$. Therefore the parity of edges in $\delta(B_1)$ and hence those of all edges in $\delta(A)$ would equal \set{t} which would imply $\beta^i(A)=\phi$.
 
 Since $A$ is unmarked either it is critical in iteration 1 or $v_A$ exhausts all its tokens and $\alpha^1(A)=\phi$. In the former case we have a tree spanning nodes corresponding to sets $X^1_A$, $B_j, j\in[p]$. In the latter case if there is no such tree there would be a tree spanning nodes corresponding to sets $B_j, j\in[p]$ and no edge in $\delta_F(A)$ incident to $X^1_A$. Again, this implies that all edges in $\delta_F(A)$ have equal parity.
 \end{proof}

\remove{Consider the time, $t$, at which set $A$ becomes minimally unsatisfied. If in the preceding iteration, node $v_A$ was a good node then in that iteration we can include the degree of the node $v_A$ in $Z\setminus Z_A$ while bounding the total degree of the leaves in $Z$ by twice the number of leaves. Since in this iteration sets $B_1,B_2,\ldots B_p$ are active we would not have to decrease edge costs of the edges incident to these sets while ensuring uniform parity of nodes in the set $A$. The dual variables increased by at least 1/2 in this iteration and we can thus account for the decrease by at most 1/2 of edges of $F$ incident to $A$ caused by our modification to the WGMV procedure. 

Suppose instead that, $v_A$ was a critical node in the iteration preceding the one in which $A$ became minimally unsatisfied. If $v_A$ was a good node in any of the preceding iterations then we can in that iteration save a dual value of 1/2 and associate it with the node $v_A$. Note that this saving of 1/2 is all we need to be able to bound the decrease in the costs of the edges incident to $A$ in $F$ by 1/2 when $A$ becomes minimally unsatisfied. If however, $v_a$ was never a good node then we claim we do not need to decrease costs of edges incident to $A$ after $A$ became minimally unsatisfied. Suppose $z\in Z_A$ was incident to $B_1$. Among the edges of $Z_A$, $z$ would be the first edge to get tight since if some other edge went tight ahead of $z$, $z$ could never go tight. Since the parity of the node (with respect to set $A$) in $A\setminus \cup_i B_i$ to which edge $z$ is incident is 0, all nodes in $B_1$ would also have parity 0. Continuing in this manner we can argue that all nodes in $A$ would have parity 0 when $A$ became minimally unsatisfied.}

\section{Computing half-integral flow in Seymour graphs}
\label{sec:Seymour}
\def\2ECAP{\mbox{\tt 2ECAP}}

In this section, we describe the connection between multicommodity flows/multicuts and connectivity augmentation problems from \cite{KGS}. In particular, we will be interested in \2ECAP, a special case of the \UCC\ problem defined in \cite{KGS}. 
\begin{definition}
\textbf{2-edge connectivity Augmentation Problem (\2ECAP)}: Given an undirected graph (without loops but possible parallel edges) $G=(V,E \cup Y)$ and edge weights $w:E\rightarrow \mathbb{Z}_{\geq 0}$ find a minimum weight set of edges $E' \subseteq E$ such that each connected component of $(V,E' \cup Y)$ is 2-edge connected. 
\end{definition}
For every $S \subseteq V$, let $f: S \rightarrow \{0,1\}$ be defined as follows: $f(S)=1$ iff exactly one edge of $Y$ crosses the cut $(S,V \setminus S)$, otherwise it is zero. $\2ECAP$ can be formulated equivalently as: find a minimum weight subset of edges $E' \subseteq E$ such that at least $f(S)$ edges of $E'$ cross the cut $(S, V \setminus S)$. It is well known that $f$ as defined above is uncrossable and hence the WGMV algorithm can be used to compute a 2-approximate solution.

Now, we define the multicommodity flow problem. Let $G=(V,E)$ be a simple undirected graph with edge capacities $c:E\rightarrow\mathbb{Z}_{\geq 0}$ (called the supply graph) and $H=(V,F)$ be a simple graph each edge of which corresponds to a commodity and the endpoints of that edge are the source/sink of that commodity (called the demand graph). Given any $G$ and $H$, an instance of sum multicommodity flow asks for a feasible flow of maximum total value between the end points of demand edges. A minimum multicut is a set of edges of $G$ of minimum total weight whose removal disconnects endpoints of all the demand edges in $G$. It is easy to see that the value of minimum multicut (C) is always greater than the value of the maximum flow (F). Given a class of instances, the maximum value of the ratio between $C$ and $F$ is known as the $\texttt{flow-multicut gap}$ for the class. This gap is $\theta(\log k)$ for general $G,H$ while it is $O(1)$ for planar $G$ and arbitrary $H$. There is rich literature on proving $\texttt{flow-multicut gaps}$ \cite{garg1996approximate,garg1997primal,klein1993excluded}. 

If we restrict the flow to be integral (resp. half integral), we call the flow-multicut gap as the integral (resp. half integral) $\texttt{flow-multicut gap}$. An instance of the multicommodity flow/multicut problem is called a Seymour instance if the union of the supply and demand graphs is planar. In \cite{KGS}, the authors establish a \texttt{flow-multicut gap} of at most 2 for Seymour instances by showing that the problem of computing a multicut in a Seymour instance is equivalent to solving an appropriate instance of $\2ECAP$ in the planar dual of the supply and demand graph. Given a planar graph $G$, let $G^{*}$ denotes its planar dual. Formally,
\begin{lemma}[\cite{KGS}]
$C$ is a multicut for the instance $(G,H)$ if and only if $C^{*}$ is a feasible solution to $\2ECAP$ for the instance $(V^{*},E^{*}\cup F^{*})$.
\end{lemma}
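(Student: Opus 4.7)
The plan is to reduce the equivalence to the classical planar duality between the cut space of $(G+H)^{*}$ and the cycle space of $G+H$ (both viewed as $\mathbb{F}_2$-vector spaces). Under this identification, every cut $\delta_{(G+H)^{*}}(W)$ in the dual corresponds to an Eulerian subgraph $\Gamma(W) \subseteq E \cup F$ of the primal---a subgraph in which every vertex has even degree---consisting of exactly those primal edges whose dual edge crosses the cut. In particular, the $F^{*}$-edges crossing the dual cut correspond to the demand edges of $\Gamma(W)$, and the $E^{*}$-edges crossing it correspond to the supply edges of $\Gamma(W)$.

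The first step is to pin down which dual cuts the $\2ECAP$ instance actually constrains, namely those crossed by exactly one $F^{*}$-edge. By the duality above, these correspond precisely to Eulerian subgraphs of $G+H$ that contain exactly one demand edge $f=(s,t)$. In any such subgraph the supply portion has $s$ and $t$ as its only odd-degree vertices, so it contains an $s$-$t$ path $P \subseteq E$. Conversely, for any demand edge $f=(s,t)$ and any $s$-$t$ path $P$ in $G$, the simple cycle $P \cup \{f\}$ is itself an Eulerian subgraph with exactly one demand edge and therefore yields a dual cut with exactly one $F^{*}$-edge.

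For the forward direction I would argue as follows: suppose $C$ is a multicut, and let $\delta_{(G+H)^{*}}(W)$ be a dual cut containing exactly one $F^{*}$-edge $f^{*}$, where $f=(s,t)$. The supply portion of the associated Eulerian subgraph $\Gamma(W)$ contains an $s$-$t$ path $P$; since $C$ disconnects $s$ from $t$ in $G$, it must meet $P$, so $C^{*}$ crosses $\delta_{(G+H)^{*}}(W)$, verifying the $\2ECAP$ constraint. The backward direction is the contrapositive: if $C$ fails to separate some demand pair $(s,t)$, pick an $s$-$t$ path $P \subseteq E \setminus C$ in $G$; then the cycle $P \cup \{f\}$ yields a dual cut with exactly one $F^{*}$-edge ($f^{*}$) and no $C^{*}$-edge, contradicting the feasibility of $C^{*}$.

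The main subtlety I expect is invoking planar duality cleanly: $G+H$ may be disconnected, carry parallel edges, or have cut vertices, all of which slightly complicate the cut-space/cycle-space bijection. These are standard issues handled by working over $\mathbb{F}_2$ and, if necessary, passing to each connected component of $G+H$ (and its embedded dual) separately; the per-component argument is exactly the one above, so assembling them finishes the proof.
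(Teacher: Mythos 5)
The paper itself does not prove this lemma---it is stated as a citation to \cite{KGS} with no argument reproduced---so there is no ``paper's own proof'' here to compare against. That said, your argument is correct and is the natural one: you use the $\mathbb{F}_2$ cycle--cut duality for plane graphs to identify subsets $W \subseteq V^{*}$ whose dual cut $\delta_{(G+H)^{*}}(W)$ contains exactly one $F^{*}$-edge $f^{*}$ (precisely the constraints with $f(W)=1$ in the $\2ECAP$ formulation) with even subgraphs of $G+H$ containing exactly one demand edge $f=(s,t)$; the supply part of such a subgraph has $s,t$ as its only odd-degree vertices and hence contains an $s$--$t$ path $P$, so ``$C$ hits every such $P$'' is exactly the multicut condition and ``$C^{*}$ meets every such $\delta(W)$'' is exactly $\2ECAP$ feasibility. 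The contrapositive for the backward direction, taking the simple cycle $P \cup \{f\}$ for an unseparated pair and noting its dual is a cut crossed by $f^{*}$ but by no edge of $C^{*}$, is clean. The one spot worth being careful about, which you correctly flag, is that the exact form of cycle--cut duality used (even subgraphs of $G+H$ correspond bijectively to edge sets of the form $\delta_{(G+H)^{*}}(W)$) requires $G+H$ to be connected as a plane graph; if not, one either works component by component or observes that the relevant constraints and paths all live within a single component, so nothing is lost. With that caveat stated, the proposal is complete.
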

The WGMV algorithm immediately gives a 2-approximation algorithm for multicuts in Seymour instances. In order to prove the \texttt{flow-multicut gap}, \cite{KGS} shows that the duals constructed by the WGMV algorithm correspond to flow paths in $G$ and that this correspondence is value preserving, ie. total flow is equal to the total value of the dual and if the duals constructed are integral (resp. half integral), then the corresponding flows are integral (resp. half integral). Formally, 
\begin{lemma}[\cite{KGS}]
There exists a flow of value $\sum_{S \subseteq V^{*}}y_S$ in $G$.
\end{lemma}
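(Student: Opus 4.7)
The plan is to exploit planar duality: since $G+H$ is planar, the WGMV algorithm is run on the planar dual $(G+H)^*$, and I want to turn each set $S$ with $y_S>0$ into a flow path in $G$. First, I would verify that for every set $S$ whose dual is raised during the algorithm, both $S$ and $V^*\setminus S$ induce connected subgraphs in $(G+H)^*$; this is a structural invariant of WGMV when run on the cut-covering instance arising from \2ECAP{} on a planar dual, and it can be established by induction over the growth phase using uncrossability. Under this invariant, $\delta_{(G+H)^*}(S)$ is a \emph{minimal} edge cut, and the standard planar-duality bijection identifies it with a simple cycle $C_S$ of $G+H$.

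Second, I would use the \2ECAP{} definition of $f$ to pin down the structure of $C_S$. The condition $f(S)=1$ says that exactly one edge of $H^*$ crosses $(S,V^*\setminus S)$, which under the primal/dual bijection translates to exactly one edge of $H$ lying on $C_S$. Removing this demand edge, say $s_S t_S$, leaves a simple path $P_S := C_S \cap E(G)$ from $s_S$ to $t_S$ in $G$, a legitimate routing path for the commodity $s_S t_S$.

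Third, the flow is built by sending $y_S$ units along $P_S$ for every $S$ with $y_S>0$; the total value is $\sum_{S\subseteq V^*} y_S$. Capacity feasibility is immediate from WGMV's dual constraint: $e\in P_S$ iff $e^*\in \delta_{(G+H)^*}(S)$, so for every $e\in E(G)$,
\[
\sum_{S:\, e\in P_S} y_S \;=\; \sum_{S:\, e^*\in \delta(S)} y_S \;\le\; c(e^*) \;=\; c(e).
\]

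The main obstacle is the first step: establishing the connectedness invariant for $S$ and $V^*\setminus S$ throughout the WGMV growth phase. Without it, a dual cut would correspond to a disjoint union of cycles rather than a single one, and the argument would have to be rephrased in terms of flow simultaneously on several paths (with the same value accounting but more bookkeeping). Once the invariant is in hand, the remainder is a direct translation under the planar primal–dual correspondence between simple cycles and minimal cuts.
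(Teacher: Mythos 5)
This lemma is only cited from \cite{KGS} in the paper; no proof is given here, so there is no in-paper argument to compare against. Your reconstruction does capture the right mechanism: dual variables $y_S$ live on cuts in $(G+H)^*$, planar duality turns these cuts into edge sets of $G+H$, the requirement $f(S)=1$ forces exactly one demand edge into the corresponding edge set, and removing that demand edge yields a path in $G$ along which $y_S$ is routed; capacity feasibility then falls out of the WGMV dual constraint exactly as you wrote it.

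However, the step you flag as the main obstacle --- proving that both $S$ and $V^*\setminus S$ induce connected subgraphs of $(G+H)^*$ so that $\delta(S)$ is a bond --- is not actually needed, and worrying about it is where your proposal goes slightly off. If $\delta_{(G+H)^*}(S)$ is not a bond, the corresponding edge set in $G+H$ is still an even (Eulerian) subgraph and therefore decomposes into edge-disjoint simple cycles. Since $f(S)=1$, exactly one edge of $H$ lies in this edge set, hence in exactly one cycle $C_S$ of any such decomposition; the other cycles contain only supply edges and can simply be discarded. Take $P_S := C_S \setminus \{s_St_S\}$, route $y_S$ along $P_S$, and the rest of your argument goes through verbatim: the total flow value is still $\sum_S y_S$, and for each supply edge $e$,
\[
\sum_{S:\, e\in P_S} y_S \;\le\; \sum_{S:\, e^*\in \delta_{(G+H)^*}(S)} y_S \;\le\; c(e^*)\;=\;c(e),
\]
because $e\in P_S$ still implies $e^*\in\delta(S)$. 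So you do not get ``flow on several paths'' from a single $S$, you get one path plus ignored cycles, and no extra bookkeeping is required. In short, the proposal is correct in substance, but you should replace the connectedness invariant (which you would have to prove and which may in fact fail mid-algorithm) with the weaker, immediate observation that the unique demand edge singles out one cycle from any cycle decomposition of $\delta_{(G+H)^*}(S)$.
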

\cite{KGS} show how to extract a half-integral flow of value at least half of any given fractional flow and an integral flow of value at least half any given half integral flow. This shows a half integral (resp. integral) flow-multicut gap of 4 (resp. 8). Using our modified WGMV algorithm, we obtain a half integral dual (and hence half integral flow) of value at least half the cost of the $\2ECAP$ solution and hence the multicut. This gives us a 2 (resp. 4) approximate half-integral (resp. integral) \texttt{flow-multicut} theorem for Seymour instances. 

\begin{theorem} \label{integral maxflow mincut}
Let $G+H$ be planar. There exists a feasible integral flow of value $F$ and a multicut of value $C$ such that $C \leq 4F$. Further, such a flow and cut can be computed in polynomial time.
\end{theorem}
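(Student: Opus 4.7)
The plan is to chain three ingredients already in hand: Theorem~\ref{thm : half integer WGMV} (the modified WGMV yields a half-integral dual whose value is at least half the primal), the planar-duality correspondence from \cite{KGS} (a feasible $\2ECAP$ solution in $(V^{*},E^{*}\cup F^{*})$ is a multicut in $(G,H)$, and a dual solution in the $\2ECAP$ is a flow of equal value in $G$ with matching granularity), and the rounding lemma from \cite{KGS} that converts a half-integral feasible flow into an integral feasible flow of at least half the value. The integral factor-$4$ gap is exactly the factor-$2$ gap of Theorem~\ref{half integral maxflow mincut} composed with the factor-$2$ loss of going from half-integral to integral flows.

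Step by step: I would first construct the $\2ECAP$ instance on $(V^{*},E^{*}\cup F^{*})$ obtained from the planar dual of $(G,H)$, with integer edge costs equal to the supply capacities. Running the modified WGMV procedure of Section~\ref{sec:WGMVmodified} on this instance returns, by Theorem~\ref{thm : half integer WGMV}, an edge set $F$ together with a half-integral dual $y$ satisfying $\sum_{e\in F}c_e\le 2\sum_S f(S)y_S$. The first \cite{KGS} lemma identifies $F$ under planar duality with a feasible multicut of $(G,H)$ of value $C=\sum_{e\in F}c_e$, and the second \cite{KGS} lemma translates the half-integral dual $y$ into a feasible half-integral flow in $G$ of value $F_{1/2}=\sum_S f(S)y_S$. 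Combining these gives $C\le 2F_{1/2}$, which is already Theorem~\ref{half integral maxflow mincut}.

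To obtain the integral statement, I would then feed the half-integral flow of value $F_{1/2}$ through the integer-rounding procedure of \cite{KGS}, which produces a feasible integral flow of value $F\ge F_{1/2}/2$. Chaining the two inequalities, $C\le 2F_{1/2}\le 4F$, which is the desired bound. Each of the three steps is polynomial-time by its cited guarantee, so their composition yields a polynomial-time algorithm producing the claimed flow and multicut.

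There is no substantive obstacle here beyond careful bookkeeping: the only new ingredient contributed by the present paper, namely half-integrality of the WGMV dual, is already packaged in Theorem~\ref{thm : half integer WGMV}, and everything else is a composition of lemmas imported from \cite{KGS}. The only point worth verifying is that the half-integrality guaranteed by Theorem~\ref{thm : half integer WGMV} is indeed the granularity hypothesis needed by the dual-to-flow translation of \cite{KGS} in order to certify half-integrality (and not merely fractional feasibility) of the produced flow, which it is by the statement of that lemma.
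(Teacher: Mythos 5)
Your proposal is correct and follows exactly the paper's own argument: run the modified WGMV procedure on the dual $\2ECAP$ instance to obtain a multicut and a half-integral dual within a factor~2 (Theorem~\ref{thm : half integer WGMV}), translate the dual to a half-integral flow via the two lemmas from \cite{KGS}, and then apply the \cite{KGS} rounding procedure to lose another factor~2 in passing from half-integral to integral flow. The paper presents this chain informally in the paragraph preceding the theorem, and your write-up matches it step for step, including the polynomial-time observation.
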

$\cite{KGS}$ shows a class of Seymour instances such that the half-integral flow-multicut gap approaches 2 from below. This, along with our upper bound of 2 proves that Theorem~\ref{half integral maxflow mincut} is tight. The best known lower bound for the integral \texttt{flow-multicut gap} is also 2 and it remains an interesting open question to determine the exact gap.

%


\bibliographystyle{plain}
\bibliography{WGMV.bib}

\end{document}